\documentclass{article}
\usepackage{amsmath,amssymb,amsfonts,mathrsfs}
\usepackage[T1]{fontenc}

\usepackage{graphicx,epstopdf}


\newtheorem{theorem}{Theorem}

\newtheorem{definition}[theorem]{Definition}
\newtheorem{example}[theorem]{Example}

\newtheorem{proposition}[theorem]{Proposition}

\newenvironment{proof}[1][Proof]{\textbf{#1.} }{\ \rule{0.5em}{0.5em}}

\begin{document}
\title{An Introduction to Quantum Filtering}
\date{\today }
\author{John Gough}
\maketitle

\begin{abstract}
The following notes are based on lectures delivered at the research school Modeling and Control of Open Quantum Systems
(Mod\'{e}lisation et contr\^{o}le des syst\`{e}mes quantiques ouverts) at CIRM, Marseille, 16-20 April, 2018, as part of the Trimester \textit{Measurement and Control of Quantum Systems: Theory and Experiments} organized at Institut Henri Poincar\'{e}, Paris, France. The aim is to introduce quantum filtering to an audience with a background in either quantum theory or classical filtering. 
\end{abstract}

\section{Introduction}
Nonlinear filtering theory is a well-developed field of engineering which is used to estimate unknown quantities in the presence of noise. One of the founders of the field was the Soviet mathematician Ruslan Stratonovich who encouraged his student Viacheslav Belavkin to extend the problem to the quantum domain. Classically, estimation works by measuring one or more variables which are dependent on the variables to estimated, and Bayes Theorem plays an essential role in inferring the unknown variables based on what we measure.

However, the proof of Bayes Theorem requires a joint probability distribution for the unknown variables and the measured ones. Once we go to quantum theory, we have to be very careful as incompatible observables do not possess a joint probability distribution - in such cases, applying Bayes Theorem will lead to erroneous results and is the root of many of the paradoxes in the theory.

Our goal is to go through the basic ideas and we derive only the simplest quantum filter. 

\section{Bayes Theorem}
\subsection{Basic Probability Theory}
\subsubsection{Some Intuitive Ideas About Probability}
We begin with an introduction to some basic probabilistic ideas. Imagine a jar full of 100 jelly beans. We select a jelly bean at random (each one has a 1/100 chance to be the one drawn from the jar.)
The beans come in different colours and textures as detailed in Figure \ref{fig:jar}. If we wish to specific both the colour and the texture, then we end up calculating a joint probability. For instance, the probability that the bean selected is both green and rough is 0.1 since we have 10 rough green beans in our jar of 100. Here we are looking for two things to occur jointly. The probability for the bean to be green is 30/100 = 0.3, and to rough is 50/100 =0.5. These are examples of \textit{marginal probabilities}; so-called as they are obtained by summing the appropriate row or column in the table to get answer in the margins. We will spend some time recalling how Bayes Theorem works classically. The Von Neumann measurement model gives a good illustration of when the estimation principle may be applied in the quantum domain, but we make some comments on the role of the Schr\"{o}dinger and Heisenberg picture. We give a discussion of stochastic processes and the classical filtering problem, before going on to the quantum version.

\begin{figure}[h]
\begin{eqnarray*}
&& \qquad \qquad \qquad \mathbf{Colour} \\
&&\mathbf{Texture}\qquad 
\begin{tabular}{l|l|l|l|l}
& Green & Yellow & Blue &  \\ \hline
Rough & \textbf{10} & \textbf{40} & \textbf{0} & 50 \\ \hline
Smooth & \textbf{20} & \textbf{10} & \textbf{20} & 50 \\ \hline
& 30 & 50 & 20 & 100
\end{tabular}
\end{eqnarray*}
\caption{Colours and textures of jelly beans in a jar.}
\label{fig:jar}
\end{figure}

Note that if we only had the marginal probabilities, then we do not have enough information to reconstruct the joint probabilities. In this problem we have Prob$\big\{$Rough$\big\}$ =0.5, Prob$\big\{$Smooth$\big\}$ =0.5, while Prob$\big\{$Green$\big\}$ =0.3, Prob$\big\{$Yellow$\big\}$ =0.5, Prob$\big\{$Blue$\big\}$ =0.2.

Let us suppose that we only knew the marginals. If we were asked to guess what proportion of the beans were both rough and green, say, then we might argue as follows: half the beans are rough; 30 out of 100 are green; so, all things being equal, 15 of the 30 green beans are rough; ergo the proportion of rough green beans is 15/100. But the joint probability is Prob$\big\{$Rough \& Green$\big\}$ =0.20 nor 0.15, so all things are not equal! What this means is of huge importance\footnote{It amounts to a huge hill of beans - sorry, I couldn't resist!}. The \guillemotleft all things being equal\guillemotright \, assumption amounts to what is known in probability theory as \textit{(statistically) independence}. We assume that the variation of one variable\footnote{In the present case we are talking about variation over a descriptive feature, so the variable is a characteristic. In what follows, we will be interested in }, say texture, is uniform over an other, here colour. It is a two way thing! If variable $X$ is independent of variable $Y$, then $Y$ must be independent of $X$ too. It has to be symmetric between $X$ and $Y$.

The fact that colour and texture are not statistically independent means that information about one is useful in working out the chances of the other. Let's suppose someone offers bets on the various colour and that you get to draw the bean from the jar - if no-one sees the colour (including yourself) then the chances for green, yellow and blue are just the marginals. But as you have the bean in your hand, you can tell whether its rough or smooth.
If it's rough then you know that there's no point betting on blue no matter how good the odds are - there are no rough blue jelly beans! You work with the conditional probabilities for the colours given the texture, while everyone else works with just the marginal probabilities.

Note that independence here is not the direct causal dependence that one might be familiar with from physics. It has to do with the distribution. Let suppose that a second jar was filled as follows (Figure \ref{fig:jarSI}).

\begin{figure}[h]
\begin{eqnarray*}
&& \qquad \qquad \qquad \mathbf{Colour} \\
&&\mathbf{Texture}\qquad 
\begin{tabular}{l|l|l|l|l}
& Green & Yellow & Blue &  \\ \hline
Rough & \textbf{10} & \textbf{22} & \textbf{8} & 40 \\ \hline
Smooth & \textbf{15} & \textbf{33} & \textbf{12} & 60 \\ \hline
& 25 & 55 & 20 & 100
\end{tabular}
\end{eqnarray*}
\caption{Colours and textures of jelly beans in a second jar.}
\label{fig:jarSI}
\end{figure}
This time, all the rows are in proportion (5:11:4), and automatically all the columns (2:3). The information that the jelly bean selected from this jar is green, for instance, does not change your probability for it to be rough - it's 10/25 which is the same as you would calculated if you didn't know the colour, 40/60.

\subsubsection{Some Not So Intuitive Ideas}

The axiomatic formulation of probability theory given by Kolmogorov is as follows. One first collects all possible outcomes into a set, $S$, called the \textit{sample space}, the assign probabilities to specified subsets. The allowed subsets are known as \textit{events} and are required to form a $\sigma$-algebra of subsets, $\mathcal{E}$, of the sample space- a standard construct from the branch of mathematics known as measure theory. 

Technically, $\mathcal{E}$ is a $\sigma$-algebra if it is a collection of subsets of $S$ such that $\emptyset \in \mathcal{E}$, if $A \in \mathcal{E}$ then its compliment $A^\prime = \{ \omega \in S: \omega \notin A \}$ is also in $\mathcal{E}$, and if $\{ A_n \}$ is an at most countable number of events in $\mathcal{E}$ then so too is their intersection $\cap_n A_n$ and union $\cup_n A_n$.

Probability is then an assignment of a probability $\mathbb{P} [A] \ge 0$ to each event $A \in \mathcal{E}$ with the rule that $\mathbb{P} [ S] =1$ and $ \mathcal{P} [ \cap_n A_n ] = \sum_n \mathcal{P} [A_n]$ for any at most countable number of events, $\{ A_n \}$ that are non-overlapping (i.e., $A_n \cap A_m = \emptyset$ if $n \neq m$).

Therefore, probability theory is realized as a of special case of measure theory where the measure $\mathbb{P}$ has maximum value $\mathbb{P} [S]=1$. However, there is more too it than that. We also get the definition of conditional probabilities: the probability of event $A$ given that $B$ has occurred is
\begin{eqnarray*}
\mathbb{P} [A|B ] = \frac{  \mathbb{P} [ A \cap B ]}{ \mathbb{P} [B] }
\end{eqnarray*}
which is the joint probability, $ \mathbb{P} [ A \cap B ]$, for both $A$ and $B$ to occur divided by the marginal probability $\mathbb{P} [B]$.
(The reader is encouraged to go back to the jelly bean example to see that this formally definition is precisely the same as the intuitive one we did in our heads.)

\subsubsection{Random Variables}
We will now restrict attention to continuous random variables with well-defined probability densities.
A random variable $X$ has probability distribution function (pdf) $\rho _{X}$
so that 
\begin{eqnarray*}
\Pr \left\{ x\leq X<x+dx\right\} =\rho _{X}\left( x\right) \,dx.
\end{eqnarray*}
Normalization requires $\int_{-\infty }^{\infty }\rho _{X}\left( x\right)
dx=1$. If we have several random variables, then we need to specify their 
\textbf{joint probability}. For instance, if we have a pair $X$ and $Y$ then
their joint pdf will be $\rho _{X,Y}\left( x,y\right) $ with 
\begin{eqnarray*}
\rho _{X}\left( x\right) &=&\int \rho _{X,Y}\left( x,y\right) dy,\quad \mathrm{%
(}x\mathrm{-marginal)} \\
\rho _{Y}\left( y\right) &=&\int \rho _{X,Y}\left( x,y\right) dx,\quad \mathrm{%
(}y\mathrm{-marginal)}
\end{eqnarray*}
and 
\begin{eqnarray*}
1=\int \int \rho _{X,Y}\left( x,y\right) dxdy.
\end{eqnarray*}

We say that $X$ and $Y$ are \textbf{statistically independent} if their
joint probability factors into the marginals 
\begin{eqnarray*}
\rho _{X,Y}\left( x,y\right) =\rho _{X}\left( x\right) \times \rho
_{Y}\left( y\right) ,\quad \text{( independence).}
\end{eqnarray*}

More generally, we can work out the conditional probabilities from a joint
probability. The pdf for $X$ given that $Y=y$ is defined to be 
\begin{eqnarray*}
\rho _{X}\left( x|y\right) \triangleq \frac{\rho _{X,Y}\left( x,y\right) }{%
\rho _{Y}\left( y\right) }.
\end{eqnarray*}
In the special case where $X$ and $Y$ are independent we have 
\begin{eqnarray*}
\rho _{X}\left( x|y\right) =\rho _{X}\left( x\right) .
\end{eqnarray*}
In other words, conditioning on the fact that $Y=y$ makes no change to our
knowledge of $X$.

\subsection{Estimation}
\label{sec:estimation}

Let $X$ be some unknown: in fact, not only do we not know its value, we
don't even know its probability distribution. We wish to get some knowledge
about $X$ however by measuring a related variable $Y$.

Our main modeling assumption is that whenever $X$ is known to take a
particular value of $x$, then the conditional pdf for $Y$ is a known
function: we write this as 
\begin{eqnarray*}
\lambda \left( y|x\right) .
\end{eqnarray*}
For fixed $y$, we refer to $\lambda \left( y|x\right) $ as the \textbf{%
likelihood function} of $x$. Note that for each $x$, $\lambda \left(
y|x\right) $ is a pdf in $y$ and so normalized in $y$ for each $x$ fixed: 
\begin{eqnarray*}
\int \lambda \left( y|x\right) dy=1.
\end{eqnarray*}
Though $\lambda \left( y|x\right) $ is not required to be normalized in $x$
for fixed $y$.

\bigskip

We have $\lambda $ which is the conditional probability for measured
variable $Y$ given that the unknown was $X=x$. But we want to solve the
inverse problem, namely to give the conditional probability for the unknown $%
X$ given the fact that we observe $Y=y$.

The problem however is not well-posed. We do not have enough information in
the problem yet to write down the joint probability To remedy this, we
introduce a pdf for $X$ which is our \textbf{a priori} guess: 
\begin{eqnarray*}
\rho _{\mathrm{prior}}\left( x\right) .
\end{eqnarray*}
We then have the corresponding joint probability for $X$ and $Y$: 
\begin{eqnarray*}
\rho _{\mathrm{prior}}\left( x,y\right) =\lambda \left( y|x\right) \times \rho
_{\mathrm{prior}}\left( x\right) .
\end{eqnarray*}
If we subsequently measure $Y=y$ then we obtain the \textbf{a posteriori}
probability 
\begin{eqnarray*}
\rho _{\mathrm{post}}\left( x|y\right)  &=&\frac{\rho _{X,Y}\left( x,y\right) 
}{\rho _{Y}\left( y\right) } \\
&=&\frac{{\lambda \left( y|x\right) \rho _{\mathrm{prior}}\left( x\right) }}{%
\int \lambda \left( y|x^{\prime }\right) \rho _{\mathrm{prior}}\left(
x^{\prime }\right) dx^{\prime }}.
\end{eqnarray*}

\begin{example}
\label{ex:s+n}
Let $X$ be the position of a particle. We measure 
\begin{eqnarray*}
Y=X+\sigma Z
\end{eqnarray*}
where $Z$ is a standard normal variable, called the ``noise'', independent
of $X$. The likelihood function is 
\begin{eqnarray*}
\lambda \left( y|x\right) =\frac{1}{\sqrt{2\pi }\sigma }e^{-\left(
y-x\right) ^{2}/2\sigma ^{2}},
\end{eqnarray*}
that is, if $X=x$ then $Y$ will be normal with mean $x$ and variance $\sigma
^{2}$. If we choose a prior $\rho _{\mathrm{prior}}$ for $X$ then 
\begin{eqnarray*}
\rho _{\mathrm{post}}\left( x|y\right) =\frac{\rho _{\mathrm{prior}}\left(
x\right) e^{-\left( y-x\right) ^{2}/2\sigma ^{2}}}{\int \rho _{\mathrm{prior}%
}\left( x^{\prime }\right) e^{-\left( y-x^{\prime }\right) ^{2}/2\sigma
^{2}}dx^{\prime }}.
\end{eqnarray*}
In the special case where $X$ is assumed to be Gaussian, say mean $\mu _{0}$
and variance $\sigma _{0}^{2}$, we can give the explicit form of the
posterior as Gaussian with mean $\mu _{1}$ and variance $\sigma _{0}^{2}$
where 
\begin{eqnarray*}
\mu _{1} &=&\frac{\sigma _{1}^{2}}{\sigma _{0}^{2}}\mu _{0}+\frac{\sigma
_{1}^{2}}{\sigma ^{2}}y \\
\frac{1}{\sigma _{1}^{2}} &=&\frac{1}{\sigma _{0}^{2}}+\frac{1}{\sigma ^{2}}.
\end{eqnarray*}
\end{example}

\begin{example}[Parameter Estimation]
Suppose we have a coin with an unknown probability, $x$, for heads. We toss
it three times and obtain the sequence $y=HHT$. The likelihood function is
then 
\begin{eqnarray*}
\lambda \left( HHT|x\right) =x^{2}\left( 1-x\right) ,\quad 0\leq x\leq 1.
\end{eqnarray*}
\begin{figure}[h]
\centering
\includegraphics[width=0.4%
\textwidth]{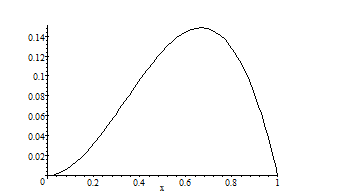}
\caption{The likelihood function of $x$ given the observation $HHT$. The
mode is 2/3.}
\label{fig:likelihood_coin}
\end{figure}

\bigskip

Let us choose the prior to be the uniform distribution $\rho _{X}\left(
x\right) =1$, that is, we take all values for the probability parameter $x$
to be equally likely. A simple calculation gives 
\begin{eqnarray*}
\rho _{\mathrm{post}}\left( x|HHT\right) =\frac{x^{2}\left( x-1\right) }{%
\int_{0}^{1}x^{\prime 2}\left( 1-x^{\prime }\right) dx^{\prime }}%
=12x^{2}\left( 1-x\right) .
\end{eqnarray*}
See Figure \ref{fig:bayes_up_uniform}.
\begin{figure}[tbph]
\centering
\includegraphics[width=0.40%
\textwidth]{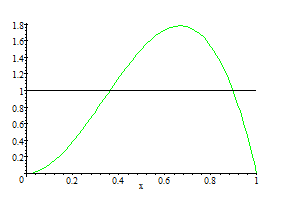}
\caption{The a priori distribution for $x$ (black) and the a posteriori
given $HHT$ (green). A posteriori mean is 3/5 and mode is 2/3.}
\label{fig:bayes_up_uniform}
\end{figure}

If we had however chosen a different prior, we would get a different answer.
For instance, if we set 
\begin{eqnarray*}
\rho _{\mathrm{prior}}\left( x\right) =6x\left( 1-x\right) ,\quad 0\leq x\leq
1,
\end{eqnarray*}
then we calculate 
\begin{eqnarray*}
\rho _{\mathrm{post}}\left( x|HHT\right) =\frac{x^{3}\left( x-1\right) ^{2}}{%
\int_{0}^{1}x^{\prime 3}\left( 1-x^{\prime }\right) ^{2}dx^{\prime }}%
=60x^{3}\left( 1-x\right) ^{2}.
\end{eqnarray*}
This time, see Figure \ref{fig:bayes_up_symm}.

\begin{figure}[tbph]
\centering
\includegraphics[width=0.40%
\textwidth]{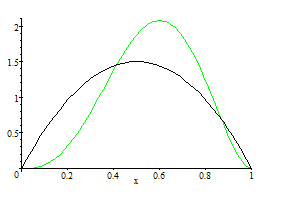}
\caption{The a priori distribution for $x$ (black) and the a posteriori
given $HHT$ (green). A posteriori mean is 4/7 and mode is 3/5.}
\label{fig:bayes_up_symm}
\end{figure}
\end{example}

\section{Quantum Measurement}
\subsection{The Basic Concepts}
The Born interpretation of the wave function, $\psi (x)$, in quantum mechanics is that $|\psi (x)|^2$ gives the probability density of finding the particle at position $x$. More generally, in quantum theory, observables are represented by self-adjoint operators on a Hilbert space. The basic postulate of quantum theory is that the pure states of a system are normalized the wave functions, $\psi$, which we will follow Dirac and denote as kets $|\Psi \rangle$. When we measure an observable, the physical value we record will be an eigenvalue. If the state is $| \Psi \rangle$ then the average value of the observable represented by $\hat A$ is $\langle \hat A \rangle = \langle \Psi |  \hat A | \Psi \rangle$.

Let us recall that a Hermitean operator $\hat{P}$ is called an \textit{orthogonal projection} if it
satisfies $\hat{P}^{2}=\hat{P}$. Then if we have a Hermitean operator $\hat{A}$ with a discrete set of eigenvalues, then there exists
a collection of orthogonal projections $\hat{P}_{a}$ labeled by the
eigenvalues $a$, satisfying $\hat{P}_{a}\hat{P}_{a^{\prime }}=0$ if $a\neq
a^{\prime }$ and $\sum_{a}\hat{P}_{a}=\hat{I}$, such that 
\begin{eqnarray*}
\hat{A}=\sum_{a}a\,\hat{P}_{a}.
\end{eqnarray*}
This is the spectral decomposition of $\hat A$. The operators $\hat{P}_{a}$ project onto $\mathcal{E}_{a}$ which is the 
\textit{eigenspace} of $\hat{A}$ for eigenvalue $a$. In other words, $%
\mathcal{E}_{a}$ is the space of all eigenvectors of $\hat{A}$ having
eigenvalue $a$. The eigenspaces are orthogonal, that is $\langle \psi |\phi
\rangle =0$ whenever $\psi $ and $\phi $ lie in different eigenspaces (this
is equivalent to $\hat{P}_{a}\hat{P}_{a^{\prime }}=0$ if $a\neq a^{\prime }$%
), and every vector $|\psi \rangle $ can be written as a superposition of
vectors $\sum_{a}|\psi _{a}\rangle $ where $|\psi _{a}\rangle $ lies in
eigenspace $\mathcal{E}_{a}$. (In fact, $|\psi _{a}\rangle =\hat{P}_{a}|\psi
\rangle $.)

We note that, for any integer $n$, 
\begin{eqnarray*}
\hat{A}^{n}=\sum_{a}a^{n}\,\hat{P}_{a}
\end{eqnarray*}
and any real $t$
\begin{eqnarray*}
e^{it\hat{A}}=\sum_{a}e^{ita}\hat{P}_{a}.
\end{eqnarray*}

Suppose we prepare a quantum system in a state $|\Psi \rangle $ and perform
a measurement of an observable $\hat{A}$. We know that we may only measure
an eigenvalue $a$ and quantum mechanics predicts the probability $p_{a}$. In
fact, using the spectral decomposition 
\begin{eqnarray*}
\langle \hat{A}^n \rangle =\langle \sum_{a}a^n \,\hat{P}_{a}\rangle
=\sum_{a}\langle a^n\,\hat{P}_{a}\rangle =\sum_{a}a^n \,p_{a},
\end{eqnarray*}
and so 
\begin{eqnarray*}
p_{a}=\langle \hat{P}_{a}\rangle \equiv \langle \Psi |\hat{P}_{a}|\Psi
\rangle .
\end{eqnarray*}

For the special case of a non-degenerate eigenvalue $a$, we have that the
eigenspace $\mathcal{E}_{a}$ is spanned by a single eigenvector $|a\rangle $%
, which we take to be normalized. In this case we have $\hat{P}%
_{a}=|a\rangle \langle a|$

\begin{eqnarray*}
p_{a}=\langle \Psi |\hat{P}_{a}|\Psi \rangle =\langle \Psi |a\rangle \langle
a|\Psi \rangle \equiv \left| \langle a|\Psi \rangle \right| ^{2}.
\end{eqnarray*}
We see that if an observable $\hat{A}$ has a non-degenerate eigenvalue $a$
with normalized eigenvector $|a\rangle $, then if the system is prepared in
state $|\Psi \rangle $, the probability of measuring $a$ in an experiment is 
$\left| \langle a|\Psi \rangle \right| ^{2}$. The modulus squared of an
overlap in this way may therefore have the interpretation as a probability.

The degenerate case needs some more attention. Here the eigenspace $\mathcal{E}_{a}$ can
spanned by a set of orthonormal vectors $|a1\rangle ,|a2\rangle ,\cdots $ so
that $\hat{P}_{a}=\sum_{n}|an\rangle \langle an|$, and so $%
p_{a}=\sum_{n}\left| \langle an|\Psi \rangle \right| ^{2}$. The choice of
the orthonormal basis for $\mathcal{E}_{a}$ is not important!

The probability $p_{a}$ is equal to the length-squared of $\hat{P}_{a}|\Psi
\rangle $, that is, 
\begin{eqnarray*}
p_{a}= \| \hat{P}_{a}\Psi  \| ^{2}.
\end{eqnarray*}
To see this, note that $ \| \hat{P}_{a}\Psi  \| ^{2}$ is the overlap of the
ket $\hat{P}_{a}|\Psi \rangle $ with its own bra $\langle \Psi |\hat{P}%
_{a}^{\dag }$ so 
\begin{eqnarray*}
\| \hat{P}_{a}\Psi \| ^{2}=\langle \Psi |\hat{P}_{a}^{\dag }\,%
\hat{P}_{a}|\Psi \rangle =\langle \Psi |\hat{P}_{a}^{2}|\Psi \rangle
=\langle \Psi |\hat{P}_{a}|\Psi \rangle =p_{a}
\end{eqnarray*}
where we used the fact that $\hat{P}_{a}=\hat{P}_{a}^{\dag }=\hat{P}_{a}^{2}$%
.

In the picture below, we project $|\Psi \rangle $ into the eigenspace $%
\mathcal{E}_{a}$ to get $\hat{P}_{a}|\Psi \rangle $. In the special case
where $|\Psi \rangle $ was already in the eigenspace, it equals its own
projection ($\hat{P}_{a}|\Psi \rangle =|\Psi \rangle $)\ and so $p_{a}=1$
since the state $|\Psi \rangle $ is normalized. If the state $|\Psi \rangle $
is however orthogonal to the eigenspace then its projection is zero ($\hat{P}%
_{a}|\Psi \rangle =0$) and so $p_{a}=0$.

In general, we get something in between. In the picture below we see that $%
|\Psi \rangle $ has a component in the eigenspace and a component orthogonal
to it. The projected vector $\hat{P}_{a}|\Psi \rangle $ will then have
length less than the original $|\Psi \rangle $, and so $p_{a}<1$.

\begin{figure}[tbph]
\centering
\includegraphics[width=1.00\textwidth]{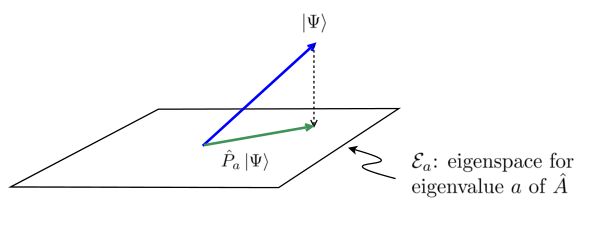} \label%
{fig:Projection_post}
\caption{The state $|\Psi \rangle $ is projected into the eigenspace $%
\mathcal{E}_{a}$ corresponding to the eigenvalue $a$ of $\hat{A}$.}
\end{figure}

\subsubsection{Von Neumann's Projection Postulate}

Suppose the initial state is $|\Psi \rangle $ and we measure the eigenvalue $%
a$ of observable $\hat{A}$ in an given experiment. A second measurement of $%
\hat{A}$ performed straight way ought to yield the same value $a$ again,
this time with certainty.

The only way however to ensure that we measure a given eigenvalue with
certainty is if the state lies in the eigenspace for that eigenvalue. We
therefore require that the state of the system immediately after the result $%
a$ is measured will jump from $|\Psi \rangle $ to something lying in the
eigenspace $\mathcal{E}_{a}$. This leads us directly to the von Neumann
projection postulate.

\bigskip

\textbf{The von Neumann projection postulate:} 
\textit{If the state of a system is given by a ket $|\Psi \rangle $, and a
measurement of observable $\hat{A}$ 
yields the eigenvalue $a$, then the state immediately after measurement
becomes}
$ |\Psi _{a}\rangle =\dfrac{1}{\sqrt{p_{a}}}\,\hat{P}_{a}|\Psi
\rangle .$

We note that the projected vector $\hat{P}_{a}|\Psi \rangle $ has length $%
\sqrt{p_{a}}$ so we need to divide by this to ensure that $|\Psi _{a}\rangle 
$ is properly normalized. The von Neumann postulate is essentially the
simplest geometric way to get the vector $|\Psi \rangle $ into the
eigenspace: project down and then normalize!

\subsubsection{Compatible Measurements}

Suppose we measure a pair of observables $\hat{A}$ and $\hat{B}$ in that
sequence. The $\hat{A}$-measurement leaves the state in the eigenspace of
the measured value $a$, the subsequent $\hat{B}$-measurement then leaves the
state in the eigenspace of the measured value $b$. If we then went back and
remeasured $\hat{A}$ would be find $a$ again with certainty? The state after
the second measurement will be an eigenvector of $\hat{B}$ with eigenvalue $b
$, but this need not necessarily be an eigenvector of $\hat{A}$.

Let $A$ and $\hat{B}$ be a pair of observables with spectral decompositions $%
\sum_{a}a\hat{P}_{a}$ and $\sum_{b}b\hat{Q}_{b}$ respectively. Let us
measure $\hat{A}$ and then $\hat{B}$ recording values $a$ and $b$
respectively. If the initial state was $|\Psi _{\text{in}}\rangle $ then we
obtain after both measurements the final state will be 
\begin{eqnarray*}
|\Psi _{\text{out}}\rangle \propto \hat{Q}_{b}\hat{P}_{a}\,|\Psi _{\text{in}%
}\rangle .
\end{eqnarray*}
In particular $|\Psi _{\text{out}}\rangle $ is an eigenstate of $\hat{B}$
with eigenvalue $b$. However suppose we also wanted $|\Psi _{\text{out}%
}\rangle $ to be an eigenstate of $\hat{A}$ with the original eigenvalue $a$%
, the we must have $\hat{P}_{a}|\Psi _{\text{out}}\rangle =|\Psi _{\text{out}%
}\rangle $ or equivalently 
\begin{eqnarray*}
\hat{P}_{a}\hat{Q}_{b}\hat{P}_{a}\,|\Psi _{\text{in}}\rangle =\hat{Q}_{b}%
\hat{P}_{a}\,|\Psi _{\text{in}}\rangle .
\end{eqnarray*}
If we want this to be true irrespective of the actual initial state $|\Psi _{%
\text{in}}\rangle $ then we arrive at the operator equation 
\begin{eqnarray*}
\hat{P}_{a}\hat{Q}_{b}\hat{P}_{a}=\hat{Q}_{b}\hat{P}_{a}.
\end{eqnarray*}

\begin{proposition}
Let $\hat{P}$ and $\hat{Q}$ be a pair of orthogonal projections satisfying $%
\hat{P}\hat{Q}\hat{P}=\hat{Q}\hat{P}$ then $\hat{P}\hat{Q}=\hat{Q}\hat{P}$.
\end{proposition}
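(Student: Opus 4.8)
The plan is to exploit the self-adjointness of the triple product $\hat{P}\hat{Q}\hat{P}$ by taking Hermitian adjoints of the hypothesis. The whole argument rests on the observation that, although $\hat{Q}\hat{P}$ and $\hat{P}\hat{Q}$ are in general not self-adjoint, the symmetric combination $\hat{P}\hat{Q}\hat{P}$ always is, and the given relation identifies this self-adjoint object with $\hat{Q}\hat{P}$.

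First I would check that $\hat{P}\hat{Q}\hat{P}$ is Hermitian. Since $\hat{P}$ and $\hat{Q}$ are orthogonal projections they satisfy $\hat{P}=\hat{P}^{\dagger}$ and $\hat{Q}=\hat{Q}^{\dagger}$, and because the adjoint reverses the order of a product,
\begin{eqnarray*}
(\hat{P}\hat{Q}\hat{P})^{\dagger}=\hat{P}^{\dagger}\hat{Q}^{\dagger}\hat{P}^{\dagger}=\hat{P}\hat{Q}\hat{P}.
\end{eqnarray*}
Next I would take the adjoint of both sides of the hypothesis $\hat{P}\hat{Q}\hat{P}=\hat{Q}\hat{P}$. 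The left-hand side is unchanged by the previous step, while the right-hand side gives $(\hat{Q}\hat{P})^{\dagger}=\hat{P}^{\dagger}\hat{Q}^{\dagger}=\hat{P}\hat{Q}$, so that
\begin{eqnarray*}
\hat{P}\hat{Q}\hat{P}=\hat{P}\hat{Q}.
\end{eqnarray*}
Comparing this with the original hypothesis $\hat{P}\hat{Q}\hat{P}=\hat{Q}\hat{P}$ and cancelling the common term $\hat{P}\hat{Q}\hat{P}$ immediately yields $\hat{P}\hat{Q}=\hat{Q}\hat{P}$, as required.

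I do not anticipate any serious obstacle; the only real step is recognising that the hypothesis forces $\hat{Q}\hat{P}$ to coincide with a self-adjoint operator, after which commutativity is automatic. Were one to prefer a more computational route, one could instead show that the commutator $\hat{P}\hat{Q}-\hat{Q}\hat{P}$ has vanishing norm by expanding $(\hat{P}\hat{Q}-\hat{Q}\hat{P})^{\dagger}(\hat{P}\hat{Q}-\hat{Q}\hat{P})$ and using $\hat{P}^{2}=\hat{P}$, $\hat{Q}^{2}=\hat{Q}$ together with the given relation, but the adjoint argument is considerably cleaner and avoids all such manipulation.
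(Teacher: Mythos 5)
Your proof is correct, and it reaches the conclusion by a genuinely shorter route than the paper. The shared kernel of both arguments is the observation that the hypothesis forces $\hat{Q}\hat{P}$ to coincide with a manifestly self-adjoint operator, whence $\hat{Q}\hat{P}=(\hat{Q}\hat{P})^{\dagger}=\hat{P}\hat{Q}$. But where you simply take the adjoint of the hypothesis itself, obtaining $\hat{P}\hat{Q}\hat{P}=\hat{P}\hat{Q}$ and comparing with $\hat{P}\hat{Q}\hat{P}=\hat{Q}\hat{P}$, the paper introduces the auxiliary operator $\hat{R}=\hat{Q}\hat{P}\hat{Q}$, verifies that it is an orthogonal projection (Hermiticity is immediate; idempotency uses the hypothesis), asserts the identification $\hat{R}=\hat{Q}\hat{P}$, and only then invokes $\hat{R}=\hat{R}^{\dagger}$. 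Your version buys two things: it is visibly shorter, and it never uses $\hat{P}^{2}=\hat{P}$ or $\hat{Q}^{2}=\hat{Q}$, so it actually establishes the stronger statement that the implication holds for arbitrary Hermitian operators, not just projections. The paper's detour buys a piece of structural insight your argument hides — namely that under the hypothesis $\hat{Q}\hat{P}$ is itself an orthogonal projection, which is the standard characterization of a commuting pair of projections — at the cost of leaving the step $\hat{Q}\hat{P}\hat{Q}=\hat{Q}\hat{P}$ unjustified on the page (it requires essentially your adjoint trick, or left-multiplying the adjoint relation by $\hat{Q}$). Your fallback suggestion of expanding $(\hat{P}\hat{Q}-\hat{Q}\hat{P})^{\dagger}(\hat{P}\hat{Q}-\hat{Q}\hat{P})$ is workable but, as you say, unnecessary.
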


\begin{proof}
We first observe that $\hat{R}=\hat{Q}\hat{P}\hat{Q}$ will again be an
orthogonal projection. To this end we must show that $R^{\dag }=R$ and $%
R^{2}=R$. However, $R^{\dag }=\left( \hat{Q}\hat{P}\hat{Q}\right) ^{\dag }=%
\hat{Q}^{\dag }\hat{P}^{\dag }\hat{Q}^{\dag }=\hat{Q}\hat{P}\hat{Q}=R$ and 
\begin{eqnarray*}
\hat{R}^{2} &=&\left( \hat{Q}\hat{P}\hat{Q}\right) \left( \hat{Q}\hat{P}\hat{%
Q}\right) =\hat{Q}\hat{P}\hat{Q}^{2}\hat{P}\hat{Q} \\
&=&\hat{Q}\hat{P}\hat{Q}\hat{P}\hat{Q}=\hat{Q}(\hat{P}\hat{Q}\hat{P})\hat{Q}
\\
&=&\hat{Q}(\hat{Q}\hat{P})\hat{Q}=\hat{Q}^{2}\hat{P}\hat{Q} \\
&=&\hat{Q}\hat{P}\hat{Q}=\hat{R}.
\end{eqnarray*}

However we also have $\hat{R}=\hat{Q}\hat{P}$, so the relation $\hat{R}=\hat{%
R}^{\dag }$ implies that $\hat{Q}\hat{P}=\hat{P}^{\dag }\hat{Q}^{\dag }=\hat{%
P}\hat{Q}$.
\end{proof}

We see that our operator identity above means that $\hat{Q}_{a}$ and $\hat{P}%
_{b}$ need to commute! If we wanted the $\hat{B}$-measurement not to disturb
the $\hat{A}$-measurement for any possible outcome $a$ and $b$, then we
require that all the eigen-projections of $\hat{A}$ commute with all the
eigen-projections of $\hat{B}$, and this implies that .

\begin{definition}
A collection of observables are compatible if they commute. We define the
commutator of two operators as 
\begin{eqnarray*}
\left[ \hat{A},\hat{B}\right] =\hat{A}\hat{B}-\hat{B}\hat{A}
\end{eqnarray*}
\end{definition}

So $\hat{A}$ and $\hat{B}$\ are compatible if $\left[ \hat{A},\hat{B}\right]
=0$.

\subsection{Von Neumann's Model of Measurement}
The postulates of quantum mechanics outlined above assume that all measurements are idealized, but one might expect the actual process of extracting information from quantum systems to be more involved. Von Neumann modeled the measurement process as follows.
We wish to get information about an observable, $\hat{X}$, say the position of a quantum system. Rather than measure 
$\hat{X}$ directly, we measure an observable $\hat{Y}$ giving the pointer
position of a second system (called the measurement apparatus).

We will reformulate the von Neumann measurement problem in the language of estimation theory from Section \ref{sec:estimation}.
First we assume
that apparatus is described by a wave-function $\phi $. The initial state of
the system and apparatus is $|\Psi _{0}\rangle =|\Psi _{\mathrm{prior}}\rangle
\otimes |\phi \rangle $, i.e., 
\begin{eqnarray*}
\langle x,y|\Psi _{0}\rangle =\Psi _{\mathrm{prior}}\left( x\right) \,\phi
\left( y\right) .
\end{eqnarray*}
(Note  that we are already falling in line with the estimation way of thinking by referring to the initial wave function of the particle as an \guillemotleft \textit{a priori} wave function\guillemotright \, - it is something we have to fix at the outset, even if we recognize it as only a guess for the correct physical state.))
The system and apparatus are taken to interact by means of the unitary 
\begin{eqnarray*}
\hat{U}=e^{i\mu \hat{X}\otimes \hat{P}_{\mathrm{app}}/\hbar }
\end{eqnarray*}
where $\hat{P}_{\mathrm{app}}=-i\hbar \frac{\partial }{\partial y}$ is the
momentum operator of the pointer conjugate to $\hat{Y}$. After coupling, the
joint state is 
\begin{eqnarray*}
\langle x,y|\hat{U}\Psi _{0}\rangle =\Psi _{\mathrm{prior}}\left( x\right)
\,\phi \left( y-\mu x\right) .
\end{eqnarray*}
If the measured value of $\hat{Y}$ is $y$, then the \textbf{a posteriori
wave-function} must be
\begin{eqnarray*}
\psi _{\mathrm{post}}(x|y)=\frac{1}{\sqrt{\rho _{Y}(y)}}\psi _{\mathrm{prior}%
}\left( x\right) \,\phi \left( y-\mu x\right) 
\end{eqnarray*}
where 
\begin{eqnarray*}
\rho _{Y}(y)=\int |\psi _{\mathrm{prior}}\left( x\right) \,\phi \left( y-\mu
x\right) |^{2}dx.
\end{eqnarray*}
Basically, the pointer position will be a random variable with pdf given by $%
\rho _{Y}$: the \textit{a posteriori} wave-function may then be thought of as a random
wave-function on the system Hilbert space:
\begin{eqnarray*}
\psi_{\mathrm{prior}} (x) \longrightarrow \psi _{\mathrm{post}}(x|Y).
\end{eqnarray*}
In the parlance of quantum theorists, the wave function of the apparatus collapses to $| y \rangle$, while we update the \textit{a priori} wave function to get the \textit{a posteriori} one.

We have been describing events in the Schr\"{o}dinger picture where states evolve while observables remain fixed. In this picture, we measure the observable $\hat Y^{\mathrm{in}} =I \otimes \hat Y$.  It is instructive to describe events in the Heisenberg picture. Here the state is fixed as $|\Psi _{0}\rangle =|\Psi _{\mathrm{prior}}\rangle
\otimes |\phi \rangle $, while the observables evolve. In fact, the observable that we actually measure is 
\begin{eqnarray*}
\hat Y^{\text{out}} = \hat U^\ast \big( 
I \otimes \hat Y \big) \hat U = 
 I \otimes \hat Y + \mu \, \hat U^\ast \big( 
 \hat X \otimes I \big)  \hat U ,
\end{eqnarray*}
from which it is clear that we are obtaining some information about $\hat X$.

In fact, the measured observable $ \hat Y^{\text{out}}$ is explicitly of the form \emph{signal}, $\hat U^\ast (\hat X
\otimes I) \hat U$, plus \emph{noise}, $\hat Y^{\text{in}}$ as in Example \ref{ex:s+n}. The noise term, $\hat Y^{\text{in}}$, is independent of the signal and has the prescribed pdf  $ | \phi (y) |^2$.

\section{Stochastic Processes}

\bigskip

\subsection{Noise}

We start with a discrete time model for noise. Suppose we have a sequence $%
\xi _{1},\xi _{2},\xi _{3},\cdots $ of independent random variables
occurring eery $\Delta t$ seconds and with 
\begin{eqnarray*}
\left\langle \xi _{k}\right\rangle =0,\quad \left\langle \xi
_{k}^{2}\right\rangle =1.
\end{eqnarray*}
A random walk, $X_1,X_2,X_3, \cdots $, is given by 
\begin{eqnarray*}
X_{n}=\sum_{k=1}^{n}\xi _{k}
\end{eqnarray*}
and we have 
\begin{eqnarray*}
\left\langle X_{n}\right\rangle =0,\quad \left\langle X_{n}^{2}\right\rangle
=n.
\end{eqnarray*}

\begin{figure}[htbp]
\centering
\includegraphics[width=1.00\textwidth]{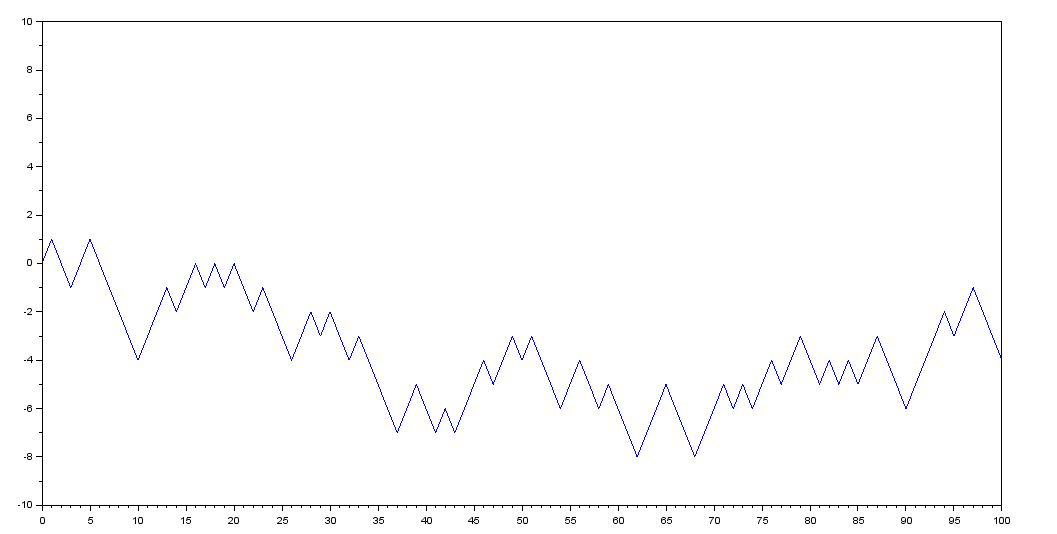}
\caption{The first $n=100$ steps of a random walk where each $\protect\xi$
takes the values $\pm 1$ with equal probability 1/2.}
\label{fig:walk_100}
\end{figure}

For time $t>0$ fixed, let $N\left( t\right) $ be the largest integer less
than or equal to $t/\Delta t$. Introduce the rescaled variable 
\begin{eqnarray*}
W_{\text{approx}}\left( t\right) =\sqrt{\Delta t}\sum_{k=0}^{N\left(
t\right) }\xi _{k}.
\end{eqnarray*}
We have 
\begin{eqnarray*}
\left\langle e^{uW_{\text{approx}}\left( t\right) }\right\rangle
&=&\left\langle e^{u\sqrt{\Delta t}\xi _{1}}\right\rangle \cdots
\left\langle e^{u\sqrt{\Delta t}\xi _{N\left( t\right) }}\right\rangle \\
&=&\left( 1+u\sqrt{\Delta t}\left\langle \xi \right\rangle +\frac{1}{2}%
u^{2}\Delta t\left\langle \xi ^{2}\right\rangle +\cdots \right) ^{N(t)} \\
&\approx &\left( 1+\frac{1}{2}u^{2}\Delta t+\cdots \right) ^{t/\Delta t} \\
&\rightarrow &e^{\frac{1}{2}u^{2}t}\text{ as }\Delta t\rightarrow 0.
\end{eqnarray*}
So $W_{\text{approx}}\left( t\right) $ converges to a limit variable $%
W\left( t\right) $ which is Gaussian with 
\begin{eqnarray*}
\left\langle W\left( t\right) \right\rangle =0,\quad \left\langle W\left(
t\right) ^{2}\right\rangle =t.
\end{eqnarray*}
The family $\left\{ W\left( t\right) :t\geq 0\right\} $ obtained this way is
called a \textbf{Wiener process}.

\begin{figure}[htbp]
\centering
\includegraphics[width=1.00%
\textwidth]{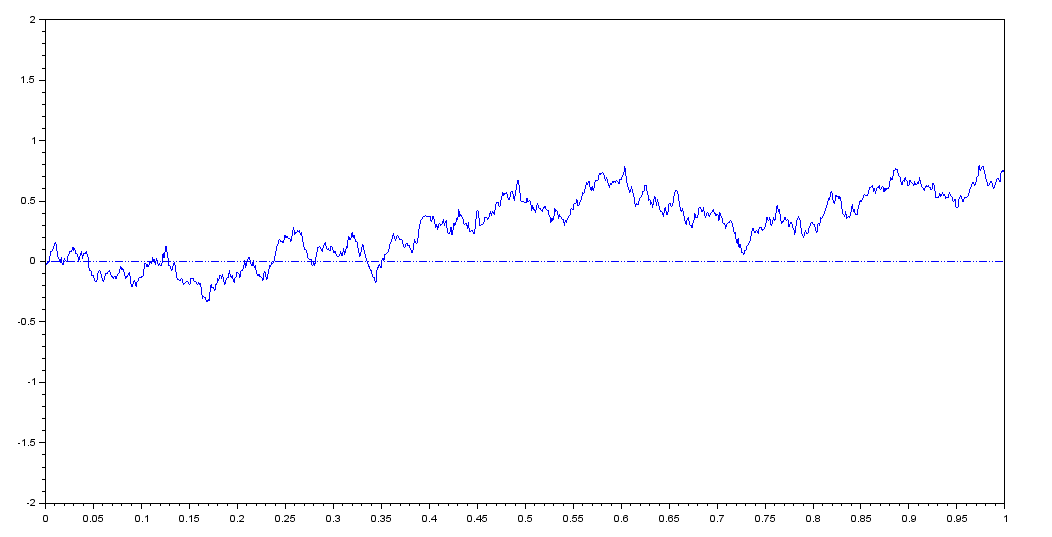}
\caption{A realization of $W_{\mathrm{approx}}$ for $t=1$ and $\Delta t =
1/1000$.}
\label{fig:wiener_1000}
\end{figure}

In Figure \ref{fig:wiener_1000}, we see a typical sample path. We notice
that it looks continuous but rough. In fact, the limit process has sample
paths that are almost always continuous and nowhere differentiable. To see
why, let us look at the approximate derivative 
\begin{eqnarray*}
\dot{W}_{k}=\frac{\Delta W_{k}}{\Delta t} =\frac{W_{k+1} - W_k }{\Delta t} =%
\frac{\sqrt{\Delta t}\, \xi _{k+1}}{\Delta t},
\end{eqnarray*}
then 
\begin{eqnarray*}
\left\langle \dot{W}_{k}\right\rangle =0,\quad \left\langle \dot{W}%
_{k}^{2}\right\rangle =\frac{1}{\Delta t}
\end{eqnarray*}
so the variance of $\dot{W}_{k}$ blows up as $\Delta t\rightarrow 0$.
Formally, one may consider \textbf{white noise} to be the limit process $%
\dot{W}\left( t\right) $ which is Gaussian and $\delta $-correlated: 
\begin{eqnarray*}
\left\langle \dot{W}\left( t\right) \right\rangle =0,\quad \left\langle \dot{%
W}\left( t\right) \dot{W}\left( s\right) \right\rangle =\delta \left(
t-s\right) .
\end{eqnarray*}

\subsection{Random Evolutions}

Let us start with the ODE 
\begin{eqnarray*}
\dot{X}\left( t\right) =v\left( X\left( t\right) \right) ,\quad X\left(
0\right) =x_{0}.
\end{eqnarray*}
To solve this numerically we use a time step $\Delta t$ as before and
consider the discrete time iteration 
\begin{eqnarray*}
X_{k+1}=X_{k}+ v \left( X_{k}\right) \Delta t,\quad X_{0}=x_{0},
\end{eqnarray*}
then $X_{\text{approx}}\left( t\right) =X_{N\left( t\right) }$ should
converge to $X\left( t\right) $ as $\Delta t\rightarrow 0$. In Figure \ref{fig:ODE}, we see a simulation of the ODE $\frac{d}{dt} X =-X$ with initial condition $X(0)=1$. The solution, of course, is just $X(t)= e^{-t}$ and the plot generated is reasonably convincing approximation.

\begin{figure}[htbp]
\centering
\includegraphics[width=1.0\textwidth]{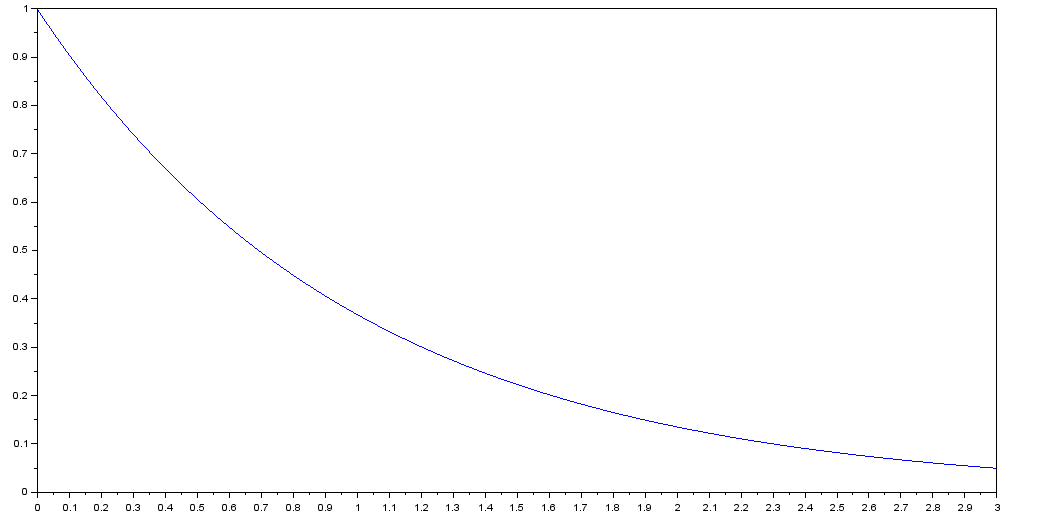}
\caption{An approximation to the solution of the ODE $\dot X = - X$ with $%
X(0)=1$. This should be $e^{-t}$.}
\label{fig:ODE}
\end{figure}

We now try and add some noise: here we consider the Langevin equation
\begin{eqnarray*}
\dot{X}\left( t\right) =v\left( X\left( t\right) \right) +\sigma \dot{W}%
(t),\quad X\left( 0\right) =x_{0}.
\end{eqnarray*}
This time we have the approximation scheme 
\begin{eqnarray*}
X_{k+1}=X_{k}+v\left( X_{k}\right) \Delta t+\sigma \sqrt{\Delta t}\, \xi
_{k+1},\quad X_{0}=x_{0}.
\end{eqnarray*}
A simulation is given below. Here we see a jagged curve replacing our smooth exponential decay.
\begin{figure}[htbp]
\centering
\includegraphics[width=1.00\textwidth]{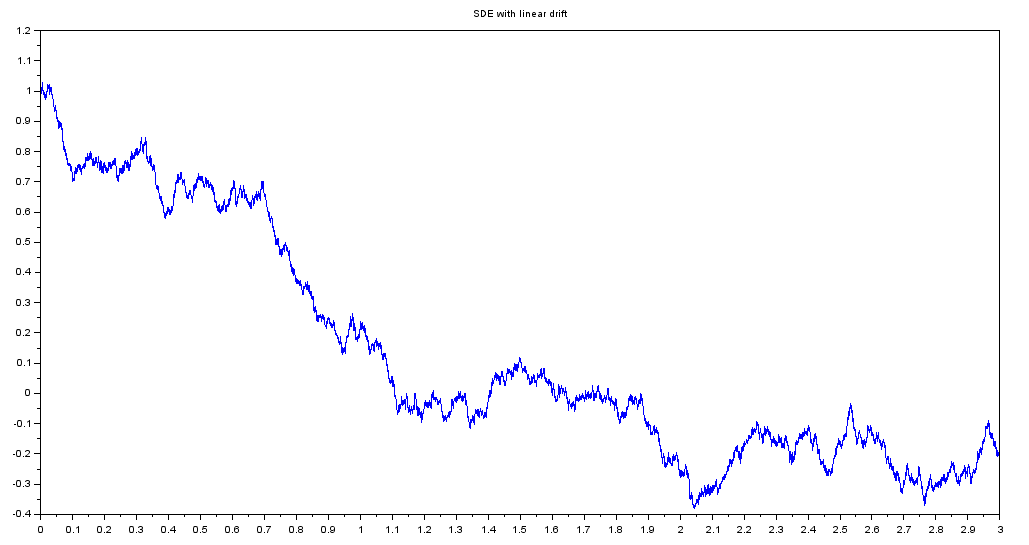}
\caption{Simulation of the SDE $\dot X = -X + \protect\sigma \dot W$ with $%
\protect\sigma = 0.3$. }
\label{fig:SDE}
\end{figure}

So far so good! But if we want to make $\sigma $ depend on $X\left( t\right) 
$ then we need to be more precise. We interpret the SDE 
\begin{eqnarray*}
dX(t)=v(X(t))\,dt+\sigma (X(t))\,dW(t),\quad X(0)=x_{0},
\end{eqnarray*}
to have future pointing differentials, that is 
\begin{eqnarray*}
dX(t)\equiv X(t+dt)-X(t),
\end{eqnarray*}
and is approximated by the scheme 
\begin{eqnarray*}
X_{k+1}=X_{k}+v\left( X_{k}\right) \Delta t+\sigma (X_{k})\sqrt{\Delta t}\, \xi
_{k+1},\quad X_{0}=x_{0}.
\end{eqnarray*}
The limit object, when it exists is referred to as a \textbf{diffusion process}.

A key issue here is that, while 
\begin{eqnarray*}
\left\langle \sigma \left( X_{k}\right) \xi _{k+1}\right\rangle
=\left\langle \sigma \left( X_{k}\right) \right\rangle \left\langle \xi
_{k+1}\right\rangle =0,
\end{eqnarray*}
we have 
\begin{eqnarray*}
\left\langle \sigma \left( X_{k}\right) \xi _{k}\right\rangle \propto \sqrt{%
\Delta t}
\end{eqnarray*}
and so we would get a different limit \ if we used $\xi _{k}$ in the
iteration rather than $\xi _{k+1}$.

\subsection{The Ito Differential}
The differential $dW(t)$ does not behave the way a true infinitesimal should. Its square is not negligible - in fact it is $dt$:
\begin{eqnarray*}
dW(t) \, dW(t) = dt .
\end{eqnarray*}
For instance, when we use the Taylor's Theorem, we will have to go to second order. This is summarized by the Ito formula
\begin{eqnarray*}
d g(W(t)) &=& g^\prime (W(t)) \, dW(t) + \frac{1}{2} 
g^{\prime \prime} (W(t)) \, dW(t)^2 + \cdots \\
&=&
g^\prime (W(t))dW(t) 
+ \frac{1}{2} g^{\prime \prime} (W(t)) dt.
\end{eqnarray*}

To see this in action, consider the Ito integral $\int_0^t W(\tau ) d W(\tau )$ which we may think of as the limit of $\sum_{k=0}^{N(t)} W_k \sqrt{\Delta t} \xi_{k+1}$. The answer is not $\frac{1}{2} W(t)^2$ since $ \langle \int_0^t W(\tau ) d W(\tau ) \rangle =\langle \int_0^t W(\tau )\rangle \langle  d W(\tau ) \rangle=0$ as the future increment is mean zero and independent of the integrand, while $\langle \frac{1}{2} W( t ) ^2 \rangle = \frac{1}{2} t$. We can work out the correct value using Ito's formula for $g(x) = \frac{1}{2} x^2$. Here
\begin{eqnarray*}
d \bigg( \frac{1}{2} W(t)^2 \bigg) =  W(t) dW(t) + \frac{1}{2} dt ,
\end{eqnarray*}
which we can integrate to get
\begin{eqnarray*}
\int_0^t W(\tau ) d W(\tau )= \frac{1}{2} \bigg[ W(t)^ 2 - t \bigg] .
\end{eqnarray*}
Now both sides average to zero!

Returning to the SDE
\begin{eqnarray*}
dX(t)=v(X(t))\,dt+\sigma (X(t))\,dW(t),\quad X(0)=x_{0},
\end{eqnarray*}
we find the equivalent formula
\begin{eqnarray*}
d g(X(t)) &=& g^\prime (X(t)) \, dX(t) + \frac{1}{2} 
g^{\prime \prime} (X(t)) \, dX(t)^2 + \cdots \\
&=&
\big[v(X(t))  g^\prime (X(t)) + 
\frac{1}{2} \sigma (X(t))^2 g^{\prime \prime} (X(t)) \big] dt\\
&&
+ \sigma (X(t))g^{\prime } (X(t))  \, dW(t) .
\end{eqnarray*}
Averaging gives $\big\langle d g(X(t)) \big\rangle
=
\big\langle \mathcal{L} g (X(t)) \big\rangle \, dt$, or
\begin{eqnarray*}
\frac{d}{dt} \big\langle  g(X(t)) \big\rangle
=
\big\langle \mathcal{L} g (X(t)) \big\rangle 
\end{eqnarray*}
where the generator of the diffusion is defined by
\begin{eqnarray*}
\mathcal{L} = v(x) \frac{\partial}{\partial x}  +
 \frac{1}{2} \sigma (x)^2 \frac{\partial^2}{\partial x^2}
.
\end{eqnarray*}
Alternatively, as $\big\langle  g(X(t)) \big\rangle = \int g(x) \rho (x,t) dx$ we may express this as a PDE for $\rho$ known as the Fokker-Planck equation:
\begin{eqnarray*}
\frac{ \partial }{\partial t} \rho (x,t )
=
\mathcal{L}^\star \rho (x,t) =  -\frac{\partial}{\partial x} [v(x) \rho (x) ]
  + \frac{1}{2}  \frac{\partial}{\partial x^2} [\sigma (x)^2 \rho (x) ]
.
\end{eqnarray*}

\begin{example}[Ornstein-Uhlenbeck process]
We consider the SDE
\begin{eqnarray*}
dX = -\gamma X dt + \sigma X dW , \qquad X(0) =x_0 .
\end{eqnarray*}
The noise term is now proportional to $X(t)$ so we need to be careful.

The solution to this equation is
\begin{eqnarray*}
X(t) = x_0  e^{- (\gamma + \frac{1}{2} \sigma^2 )t + \sigma W(t)}
\end{eqnarray*}
which can easily be seen by using the Ito formula. (Exercise)
\end{example}

\subsection{Stochastic Processes}

A \textbf{stochastic process} is a family, $\left\{ X\left( t\right) :t\geq
0\right\} $, of random variables labeled by time. The process is determined
by specifying all the multi-time distributions 
\begin{eqnarray*}
\rho \left( x_{n},t_{n};\cdots ;x_{1},t_{1}\right)
\end{eqnarray*}
for $X\left( t_{1}\right) =x_{1},\cdots ,X\left( t_{n}\right) =x_{n}$ for
each $n\geq 0$.

\bigskip

A stochastic process is said to be \textbf{Markov} if the multi-time
distributions take the form 
\begin{eqnarray*}
\rho \left( x_{n},t_{n};\cdots ;x_{1},t_{1}\right) = T ( x_n , t_n | x_{n-1}
, t_{n-1} ) \cdots T (x_2 , t_2 | x_1 , t_1 ) \, \rho (x_1 , t_1) ,
\end{eqnarray*}
where whenever $t_n > t_{n-1} > \cdots > t_1$.

Here $T(x,t|x_{0},t_{0})$ is the probability density for $X(t)=x$ given that 
$X(t_{0})=x_{0}$, ($t>t_{0}$). 
\begin{eqnarray*}
\text{Prob} \big\{
x \le X(t) \le x + dx | X(t_0) =x_0 \big\}
= T (x , t | x_0 , t_0 ) \, dx ,
\end{eqnarray*}
for $t > t_0$.
It is called the \textbf{transition mechanism}
of the Markov process.

For consistence we should have the following propagation rule, known as the Chapman-Kolmogorov equation in probability theory,
\begin{eqnarray*}
\int T( x, t | x_1 , t_1 )
 \, T (x_1 , t_1 | x_0 , t_0 ) \, dx_1
= T(x,t | x_0 , t_0 ) ,
\end{eqnarray*}
for all $t > t_1 > t_0$.

\bigskip
\begin{example}
The Wiener process (Brownian motion) is determined by 
\begin{eqnarray*}
T\left( x,t|x_{0},t_{0}\right) &=&\frac{1}{\sqrt{2\pi \left( t-t_{0}\right) }%
}e^{-\frac{\left( x-x_{0}\right) ^{2}}{2\left( t-t_{0}\right) }}, \\
\rho \left( x,0\right) &=&\delta _{0}\left( x\right) .
\end{eqnarray*}
The transition mechanism here is the Green's function for the heat equation
\begin{eqnarray*}
\frac{\partial}{\partial t} \rho = \frac{1}{2}
\frac{\partial^2}{\partial x^2} \rho .
\end{eqnarray*}
(In other words, given the data $\rho ( \cdot, t_0) = f(\cdot )$ at time $t_0$, the solution for later times is $\rho (x,t) = \int T(x,t | x_0 , t_0 ) f (x_0) \, dx_0$.)

Norbert Wiener gave an explicit construction - known as the canonical version of Brownian motion, where the sample space is the space of continuous paths, $\mathbf{w}=\left\{ w\left(
t\right) :t\geq 0\right\} $, starting a the origin as sample space, with a suitable $\sigma$-algebra of subsets and
a well defined measure $\mathbb{P}_{\text{Wiener}}^{t}$.
\end{example}

\subsection{Path Integral Formulation}

Indeed, we have 
\begin{eqnarray*}
\rho \left( x_{n},t_{n};\cdots ;x_{1},t_{1}\right) \,dx_{n}\cdots
dx_{1}\propto e^{-\sum_{k}\frac{\left( x_{k}-x_{k-1}\right) ^{2}}{2\left(
t_{k}-t_{k-1}\right) }}dx_{n}\cdots dx_{1}.
\end{eqnarray*}
Formally, we may introduce a limit ``path integral'' with probability
measure on the space of paths 
\begin{eqnarray*}
\mathbb{P}_{\text{Wiener}}^{t}\left[ d\mathbf{w}\right] =e^{-S_{\text{Wiener}%
}\left[ \mathbf{w}\right] }\mathcal{D}\mathbf{w}.
\end{eqnarray*}
where we have the action 
\begin{eqnarray*}
S_{\text{Wiener}}\left[ \mathbf{w}\right] =\int_{0}^{t}\frac{1}{2}\dot{w}%
\left( \tau \right) ^{2}d\tau .
\end{eqnarray*}

For a diffusion $X\left( t\right) $ satisfying 
\begin{eqnarray*}
dX=v\left( X\right) dt+\sigma \left( X\right) dW
\end{eqnarray*}
we have the corresponding measure 
\begin{eqnarray*}
\mathbb{P}_{X}^{t}\left[ d\mathbf{x}\right] =e^{-S_{X}\left[ \mathbf{x}%
\right] }\mathcal{D}\mathbf{x}.
\end{eqnarray*}
where we have the action (substitute $\dot{w}=\frac{\dot{x}-w}{\sigma }$
into $S_{\text{Wiener}}\left[ \mathbf{w}\right] $, and allow for a Jacobian
correction) 
\begin{eqnarray*}
S_{X}\left[ \mathbf{x}\right] =\int_{0}^{t}\frac{1}{2}\frac{[\dot{x}%
-v(x)]^{2}}{\sigma (x)^{2}}d\tau +\frac{1}{2}\int_{0}^{t}\nabla .v(x)d\tau .
\end{eqnarray*}

\section{The Classical Filtering Problem}

Suppose that we have a system described by a process $\left\{ X\left(
t\right) :t\geq 0\right\} $. We obtain information by observing a related
process $\left\{ Y\left( t\right) :t\geq 0\right\} $. 
\begin{eqnarray*}
dX &=& v \left( X\right) dt+\sigma \left( X\right) dW\quad \text{(stochastic
dynamics),} \\
dY &=&h\left( X\right) dt+dZ\quad \text{(Noisy observations).}
\end{eqnarray*}
Here we assume that the dynamical noise $W$ and the observational noise $Z$
are independent Wiener processes.

\subsection{Bayesian Approach}
The joint probability of both $X$ and $Y$ up to time $t$ is 
\begin{eqnarray*}
\mathbb{P}_{X,Y}^{t}\left[ d\mathbf{x},d\mathbf{y}\right] =e^{-S_{X,Y}\left[
x,y\right] }\mathcal{D}\mathbf{x}\mathcal{D}\mathbf{y},
\end{eqnarray*}
where 
\begin{eqnarray*}
S_{X,Y}\left[ \mathbf{x},\mathbf{y}\right]  &=&S_{X}\left[ \mathbf{x}\right]
+\int_{0}^{t}\frac{1}{2}\left[ \dot{y}-h\left( x\right) \right] ^{2}d\tau  \\
&=&S_{X}\left[ \mathbf{x}\right] +S_{\text{Wiener}}[\mathbf{y}]-\int_{0}^{t}%
\left[ h\left( x\right) \dot{y}-\frac{1}{2}h\left( x\right) ^{2}\right]
d\tau ,
\end{eqnarray*}
or 
\begin{eqnarray*}
\mathbb{P}_{X,Y}^{t}\left[ d\mathbf{x},d\mathbf{y}\right] =\mathbb{P}_{X}^{t}%
\left[ d\mathbf{x}\right] \mathbb{P}_{\mathrm{Wiener}}^{t}\left[ d\mathbf{y}%
\right] \,  \lambda \left( \mathbf{y}%
| \mathbf{x} \right) .
\end{eqnarray*}
where the \textbf{Kallianpur-Streibel likelihood}\footnote{Readers with a background in stochastic processes will recognize this as a Radon-Nikodym derivative associated with a Girsanov transformation.} is
\begin{eqnarray*}
\lambda \left( \mathbf{y}%
|\mathbf{x} \right)  =e^{\int_{0}^{t}\left[ h\left( x\right) dy(\tau )-\frac{1}{2}h\left(
x\right) ^{2}d\tau \right] }.
\end{eqnarray*}
The distribution for $X\left( t\right) $ given observations $\mathbf{y}%
=\left\{ y\left( \tau \right) :0\leq \tau \leq t\right\} $ is then
\begin{eqnarray*}
\rho \left( x_{t}|\mathbf{y}\right) =\frac{\int_{ x(0)=x_0}^{ x(t) =x_t } \lambda \left( \mathbf{y}%
| \mathbf{x}\right) \mathbb{P}_{X}^{t}%
\left[ d\mathbf{x}\right] }{\int_{x(0) =x_0} \lambda \left( \mathbf{y}%
| \mathbf{x}^\prime \right) \mathbb{P}_{X}^{t}%
\left[ d\mathbf{x}^\prime \right]}
\end{eqnarray*}

\subsection{The Filter Equations}
Let us write $\rho_t (x) $ for $\rho_t^{\text{post}} (x | \{ Y(\tau ) : 0 \le \tau \leq t\}  )$. This is the pdf for $X(t)$ conditioned on the past observations $\{ Y(\tau ) : 0 \le \tau \leq t\}$.

The estimate for $f (X(t))$ for any function $f$ is called the \textbf{filter} and we may write this as
\begin{eqnarray}
\pi_t (f) =\int \rho_t (x) f(x) \, dx =
 \frac{ \int \sigma_t (x) f(x) dx }{\int \sigma_t (x^\prime )
dx^\prime }
\label{eq:filter_pi}
\end{eqnarray}
where the non-normalized $\sigma_t (x_t) =\int_{ x(0)=x_0}^{ x(t) =x_t } \lambda \left( \mathbf{y}%
| \mathbf{x}\right) \mathbb{P}_{X}^{t}%
\left[ d\mathbf{x}\right]$ can be shown to satisfy the \textbf{Duncan-Mortensen-Zakai equation}
\begin{eqnarray*}
d \sigma_t (x) = \mathcal{L}^\ast \sigma_t (x) \, dt
+h(x) \sigma_t (x) \, dY(t) .
\end{eqnarray*}

The estimate for $f (X(t))$ will be the \textbf{filter}
\begin{eqnarray*}
d \pi_t (f) =\pi_t (\mathcal{L}  f) \, dt + \big\{ \pi_t (fh) - \pi_t (f) \pi_t (h) \big\} dI(t) ,
\end{eqnarray*}
where the \textbf{innovations process} is defined as
\begin{eqnarray*}
dI(t)  = dY(t) - \pi_t (h ) \, dt . 
\end{eqnarray*}

\section{Quantum Markovian Systems}

\subsection{Quantum Systems with Classical Noise}
We consider a quantum system driven by Wiener noise. For $H$ and $R$ self-adjoint, we set
\begin{eqnarray*}
U(t) = e^{-iH t -iR W(t) } ,
\end{eqnarray*}
which clearly defines a unitary process. From the Ito calculus we can quickly deduce the corresponding Schr\"{o}dinger equation
\begin{eqnarray*}
dU(t) = \big[ -iH - \frac{1}{2} R^2 \big] U(t) \, dt
-iR U(t) \, dW(t) .
\end{eqnarray*}
If we set $j_t (X) = U(t)^\ast XU(t)$, which we may think of as an embedding of the system observable $X$ into a noisy environment, then we similarly obtain
\begin{eqnarray*}
dj_t (X) 
= j_t \big( \mathcal{L} (X) \big) \, dt
-i j_t \big(  [X,R] \big) \, dW(t) .
\end{eqnarray*}
where
\begin{eqnarray*}
\mathcal{L} (X) =
-i[X,H] - \frac{1}{2} \big[ [X,R],R \big] .
\end{eqnarray*}

An alternative is to use Poissonian noise. Here we apply a unitary kick, $S$, at times distributed as a Poisson process with rate $\nu>0$. Let $N(t)$ count the number of kicks up to time $t$, then $\{ N(t) : t \ge 0 \}$ is a stochastic process with independent stationary increments (like the Wiener process) and we have the Ito rules
\begin{eqnarray*}
dN(t) \, dN(t) = dN(t), \qquad \langle dN(t) \rangle
= \nu \, dt .
\end{eqnarray*}
The Schr\"{o}dinger equation is $dU(t) = (S-I) U(t) \, dN(t)$ and for the evolution of observables we now have
\begin{eqnarray*}
dj_t (X) = j_t \big( \mathcal{L} (X)\big) dN(t), \qquad
\mathcal{L}(X) = S^\ast X S -X.
\end{eqnarray*}

\subsection{Lindblad Generators}
A quantum dynamical semigroup is a family of CP maps, 
$\{ \Phi_t: t \geq 0\}$, such that $\Phi_t \circ \Phi_s
=\Phi_{t+s}$ and $\Phi (I) =I$. Under various continuity conditions one can show that the general form of the generator is
\begin{eqnarray*}
\mathcal{L} (X) = \sum_k \frac{1}{2} L_k^\ast [X,L_k]
+ \sum_k \frac{1}{2} [L_k^\ast ,X] L_k - i [X,H ].
\end{eqnarray*} 
These include the examples emerging from classical noise above - in fact, combinations of the Wiener and Poissonian cases give the general classical case. But the class of Lindblad generators is strictly larger that this, meaning that we need quantum noise! This is typically what we consider when modeling quantum optics situation.

\subsection{Quantum Noise Models}

\subsection{Fock Space}
We recall how to model bosonic fields. We wish to describe a typical pure state $| \Psi \rangle$ of the field. If we look at the field we expect to see a certain number, $n$, of particles at locations $x_1 , x_2 , \cdots , x_n$ and to this situation we assign a complex number (the probability amplitude) $\psi_n (x_1 , x_2, \cdots x_n ) $. As the particles are indistinguishable bosons, the amplitude should be completely symmetric under interchange of particle identities. 

\begin{figure}[htbp]
	\centering
		\includegraphics[width=0.50\textwidth]{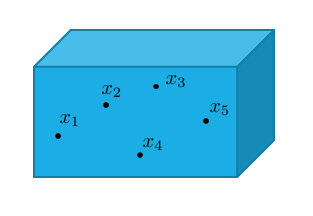}
	\caption{Quantum field in a box - a 5 photon state. }
	\label{fig:field_box}
\end{figure}

The field however can have an indefinite number of particles - that is, it can be written as a superposition of fixed number states. The general form of a pure state for the field will be
\begin{eqnarray*}
| \Psi \rangle = \big( \psi_0 , \psi_1 , \psi_2, \psi_3 ,
\cdots \big).
\end{eqnarray*}
Note that the case $n=0$ is included and is understood as the vacuum state. Here $\psi_0 $ is a complex number, with $ p_0 =| \psi_0 |^2$ giving the probability for finding no particles in the field.

The probability that we have exactly $n$ particles is
\begin{eqnarray*}
p_{n}=\int \left| \psi _{n}\left( x_{1},x_{2},\cdots ,x_{n}\right) \right|
^{2}dx_{1}dx_{2}\cdots dx_{n},
\end{eqnarray*}
and the normalization of the state is therefore $\sum_{n=0}^\infty p_n =1$.

In particular, we take the vacuum state to be
\begin{eqnarray*}
| \Omega \rangle = \big( 1 , 0,0,0, \cdots \big) .
\end{eqnarray*}

The Hilbert space spanned by such indefinite number of 
indistinguishable boson states is called \textbf{Fock Space}.

A convenient spanning set is given by the exponential vectors 
\begin{eqnarray*}
\langle x_{1},x_{2},\cdots ,x_{n}|\exp \left( \alpha \right) \rangle =\frac{1%
}{\sqrt{n!}}\alpha \left( x_{1}\right) \alpha \left( x_{2}\right) \cdots
\alpha \left( x_{n}\right) .
\end{eqnarray*}
They are in fact over-complete and we have the inner products
\begin{eqnarray*}
&&\langle \exp \left( \alpha \right) |\exp \left( \beta \right) \rangle \\
&=&\sum_{n}  \frac{1}{\sqrt{n!}} \int \alpha \left(
x_{1}\right) ^{\ast }\cdots \alpha \left( x_{n}\right) ^{\ast }\beta \left(
x_{1}\right) \cdots \beta \left( x_{n}\right) \,dx_{1}\cdots dx_{n} \\
&=&e^{\int \alpha \left( x\right) ^{\ast }\beta \left( x\right) dx} \\
&=&e^{\langle \alpha |\beta \rangle }.
\end{eqnarray*}
The exponential vectors, when normalized, give the analogues to the coherent states for a single mode.

We note that the vacuum is an example: $| \Omega \rangle = | \exp (0) \rangle$.

\subsection{Quanta on a Wire}
We now take our space to be 1-dimensional - a wire. Let's parametrize the position on the wire by variable $\tau$, and denote by $\mathfrak{F}_{[s,t]}$ the Fock space over a segment of the wire $s \le \tau \le t$. We have the following tensor product decomposition \begin{eqnarray*}
\mathfrak{F}_{A \cup B} = \mathfrak{F}_A \otimes
\mathfrak{F}_B, \qquad \qquad \text{if} A \cap B = \emptyset
.
\end{eqnarray*}

In is convenient to introduce quantum white noises $b(t)$ and $b(t)^\ast$ satisfying the singular commutation relations
\begin{eqnarray*}
 [b(t) ,b(s)^\ast ] &=& \delta (t-s) .
\end{eqnarray*}
Here $b(t)$ annihilates a quantum of the field at location $t$. In keeping with the usual theory of the quantized harmonic oscillator, we take it that $b(t)$ annihilates the vacuum: $b(t) \, | \Omega \rangle  = 0$. More generally, this implies that
\begin{eqnarray}
b(t) \, | \exp (\beta ) \rangle =
\beta(t)\, | \exp (\beta ) \rangle .
\label{eq:eigen_b}
\end{eqnarray}
The adjoint $b(t)^\ast$ creates a quantum at position $t$.

The quantum white noises are operator densities and are singular, but their integrated forms do correspond to well defined operators
which we call the \textbf{annihilation and creation processes}, respectively,
\begin{eqnarray*}
B(t) = \int_0^t b( \tau ) d \tau , \qquad
B(t)^\ast = \int_0^t b(\tau )^\ast d \tau .
\end{eqnarray*}
We see that 
\begin{eqnarray*}
[B(t) , B(s)^\ast ]
= \int_0^t d\tau \int_0^s d \sigma \, \delta (\tau - \sigma )
= \text{min} (t,s).
\end{eqnarray*}

In addition we introduce a further process, called the \textbf{number process}, according to
\begin{eqnarray*}
\Lambda (t) = \int_0^t b(\tau )^\ast b ( \tau ) d \tau .
\end{eqnarray*}

\subsection{Quantum Stochastic Models}
We now think of our system as lying at the origin $ \tau =0$ of a quantum wire. The quanta move along the wire at the speed of light, $c$, and the parameter $\tau$ can be thought of as $x/c$ which is the time for quanta at a distance $x$ away to reach the system. Better still $\tau$ is the time at which this part of the field passes through the system. The process $B(t) = \int_0^t b( \tau ) d\tau$ is the operator describing the annihilation of quanta passing through the system at some stage over the time-interval $[0,t]$.

Fix a system Hilbert space, $\mathfrak{h}_0$, called the \textbf{initial space}. A quantum stochastic process is a family of operators, $ \{ X(t): t \ge 0\}$, acting on $\mathfrak{h}_0 \otimes \mathfrak{F}_{[0, \infty )}$.                         .

The process is \textbf{adapted} if, for each $t$, the operator $X(t)$ acts trivially on the future environment factor                .

QSDEs with adapted coefficients where originally introduced by Hudson \& Parthasarathy in 1984. Let $\{ X_{\alpha \beta } (t) : t \ge 0\}$ be four adapted quantum stochastic processes defined for $\alpha , \beta  \in \{ 0,1 \}$. We then define consider the QSDE
\begin{eqnarray}
\dot X (t) =
b(t)^\ast (t) X_{11} (t) b(t) +
b(t)^\ast X_{10} +
X_{01} (t) b(t)
+X_{00} (t),
\label{eq:fluxion_QSDE}
\end{eqnarray}
with initial condition $X(0) = X_0 \otimes I$. To understand this we take matrix elements between states of the form $| \phi \otimes \exp (\alpha ) \rangle$ and use the eigen-relation (\ref{eq:eigen_b}) to get the integrated form
\begin{eqnarray*}
\langle \phi \otimes \exp ( \alpha ) |  X (t) | \psi \otimes \exp (\beta ) \rangle = \langle \phi    | X_0  | \psi \rangle \, \langle \exp ( \alpha ) |\exp (\beta ) \rangle 
\end{eqnarray*}
\begin{eqnarray*}
+&& \int_0^t \alpha(\tau )^\ast\langle \phi \otimes \exp ( \alpha ) |  X_{11} (t) | \psi \otimes \exp (\beta ) \rangle  \beta (\tau ) d\tau \\
+&& \int_0^t \alpha(\tau )^\ast \langle \phi \otimes \exp ( \alpha ) |  X_{10} (t) | \psi \otimes \exp (\beta ) \rangle   d\tau \\
+&& \int_0^t \langle \phi \otimes \exp ( \alpha ) |  X_{01} (t) | \psi \otimes \exp (\beta ) \rangle   \beta (\tau ) d\tau \\
+&& \int_0^t \langle \phi \otimes \exp ( \alpha ) |  X_{00} (t) | \psi \otimes \exp (\beta ) \rangle     d\tau 
.
\end{eqnarray*}
Processes obtain this way are called quantum stochastic integrals.

The approach of Hudson and Parthasarathy is actually different. The arrive at the process defined by (\ref{eq:fluxion_QSDE}) by building the analogue of the Ito theory for stochastic integration: that is the show conditions in which
\begin{eqnarray}
dX(t) &=& X_{11}(t) \otimes d\Lambda (t)
+X_{10} (t) \otimes dB(t)^\ast
+X_{01} (t) \otimes dB (t)
+ X_{00} (t) \otimes dt ,\notag \\
\quad
\end{eqnarray}
makes sense as a limit process where all the increments are future pointing. That is $\Delta \Lambda \equiv \Lambda (t + \Delta t) - \Lambda (t)$ with $\Delta t >0$, etc.

One has, for instance,
\begin{eqnarray*}
&&\langle \phi \otimes \exp ( \alpha ) |  X_{00} (t) \otimes \Delta B (t)| \psi \otimes \exp (\beta ) \rangle    
\\ && \quad = \bigg( \int_t^{t+\Delta t}   \beta (\tau ) d\tau \bigg) \times  
\langle \phi \otimes \exp ( \alpha ) |  X_{00} (t) \otimes I| \psi \otimes \exp (\beta ) \rangle ,
\end{eqnarray*}
etc., so the two approaches coincide.

\subsection{Quantum Ito Rules}
It is clear from (\ref{eq:fluxion_QSDE}) that this calculus is Wick ordered - note that the creators $b(t)^\ast$ all appear to the left and all the annihilators, $b(t)$, appear to the right of the coefficients. The product of two Wick ordered expressions in not immediately Wick ordered and one must use the singular commutation relations to achieve this. This results in a additional term which corresponds to a quantum Ito correction. 

We have 
\begin{eqnarray*}
dB(t) dB(t) = dB(t)^\ast dB (t) = dB^\ast (t) dB^\ast (t)=0
\end{eqnarray*}
To see this, let $X_t $ adapted, then
\begin{eqnarray*}
\langle \exp (\alpha) | X_t dB(t)^\ast dB(t) | \exp (\beta) \rangle
= \alpha (t)^\ast \langle \exp ( \alpha ) | X_t \exp (\beta ) \rangle
\beta (t)
\,  (dt)^2
\end{eqnarray*}
As we have a square of $dt$ we can neglect such terms. 

However, we have
\begin{eqnarray*}
[ B(t) - B(s) , B (t)^\ast - B(s)^\ast ] = t-s,
\qquad (t>s)
\end{eqnarray*}
and so $\Delta B \, \Delta B^\ast = \Delta B^\ast
\Delta B + \Delta t$. The infinitesimal form of this is then
\begin{eqnarray*}
dB(t) dB(t)^\ast = dt .
\end{eqnarray*}
This is strikingly similar to the classical rule for increments of the Wiener process!

In fact, we have the following quantum Ito table
\begin{eqnarray*}
\begin{tabular}{l|llll}
$\times $ & $dt$ & $dB$ & $dB^{\ast }$ & $d\Lambda $ \\ \hline
$dt$ & 0 & 0 & 0 & 0 \\ 
$dB$ & 0 & 0 & $dt$ & $dB$ \\ 
$dB^{\ast }$ & 0 & 0 & 0 & 0 \\ 
$d\Lambda $ & 0 & 0 & $dB^{\ast }$ & $d\Lambda $%
\end{tabular}
\end{eqnarray*}
Each of the non-zero terms arises from multiplying two processes that are not in Wick order.

For a pair of quantum stochastic integrals, we have the following quantum Ito product formula
\begin{eqnarray*}
d \big( XY \big)
=(dX) dY + dX (dY) + (dX) (dY).
\end{eqnarray*}
Unlike the classical version, the order of $X$ and $Y$ here is crucial.

\subsection{Some \guillemotleft Classical Processes\guillemotright \, On Fock Space}
The process $Q(t) = B(t) + B(t)^\ast $ is self-commuting, that is $[Q(t),Q(s) ] =0, \quad \forall t,s$, and has the distribution of a Wiener process is the vacuum state
\begin{eqnarray*}
\langle \dot Q (t) \rangle &=& \langle \Omega
| [
b(t) + b(t)^\ast ] \Omega \rangle =0, \\
\langle \dot Q(t) \dot Q(s) \rangle
&=& \langle \Omega | b(t) b^\ast (s) \Omega \rangle
= \delta (t-s) .
\end{eqnarray*}

The same applies to $P(t) = \frac{1}{i} [ B(t) - B(t)^\ast ]$, but
\begin{eqnarray*}
[Q(t), P(s)] =2i \, \text{min}(t,s).
\end{eqnarray*}
So we have two non-commuting Wiener processes in Fock space. We refer to $Q$ and $P$ as canonically conjugate quadrature processes.

One see that, for instance,
\begin{eqnarray*}
dQ dQ = dB dB^\ast = dt.
\end{eqnarray*}

We also obtain a Poisson process by the prescription
\begin{eqnarray*}
N(t) = \Lambda (t) + \sqrt{\nu} B^\ast (t)
+\sqrt{\nu} B(t) + \nu t.
\end{eqnarray*}
One readily checks that $dN dN = dN$ from the quantum Ito table.

\subsection{Emission-Absorption Interactions}
Let us consider a singular Hamiltonian of the form
\begin{eqnarray}
\Upsilon (t) =
H \otimes I
+i
L \otimes b(t)^\ast - i L^\ast \otimes b(t).
\label{eq:Upsilon}
\end{eqnarray}
We will try and realize the solution to the Schr\"{o}dinger equation
\begin{eqnarray}
\dot U (t) = -i \Upsilon (t) \, U(t), \qquad U(0)=I.
\label{eq:QSDE_flux_Up}
\end{eqnarray}
as a unitary quantum stochastic integral process.

Let us first remark that the annihilator part of (\ref{eq:Upsilon}) will appear out of Wick order when we consider (\ref{eq:QSDE_flux_Up}). The standard approach in quantum field theory is to develop the unitary $U(t)$ as a Dyson series expansion - often re-interpreted as a time order-exponential:
\begin{eqnarray*}
U(t) &=& I -i \int_0^t \Upsilon (\tau ) U(\tau ) 
d \tau \\
&=&  1 - i \int_0^t d\tau \Upsilon (\tau )
+ (-i)^2 \int_0^t d \tau_2 \int_0^{\tau_2} d\tau_2
\Upsilon (\tau _2 ) \Upsilon (\tau _1 ) + \cdots\\
&=& \vec{T} e^{-i \int_0^t \Upsilon ( \tau ) d\tau} .
\end{eqnarray*}
In our case the field terms - the quantum white noises - are linear, however, we have the problem that they come multiplied by the system operators $L$ and $L^\ast$ which do not commute, and don't necessarily commute with $H$ either.

Fortunately we can do the Wick ordering in one fell swoop rather than having to go down each term of the Dyson series.
We have
\begin{eqnarray*}
\left[ b\left( t\right) ,U\left( t\right) \right]  &=&\left[ b\left(
t\right) ,I-i\int_{0}^{t}\Upsilon \left( \tau \right) U\left( \tau \right)
d\tau \right] =-i\int_{0}^{t}\left[ b\left( t\right) ,\Upsilon \left( \tau
\right) \right] U\left( \tau \right) d\tau  \\
&=&\int_{0}^{t}\left[ b\left( t\right) ,Lb\left( \tau \right) ^{\ast }\right]
U\left( \tau \right) d\tau \\
&=&L\int_{0}^{t}\delta \left( t-\tau \right)
U\left( \tau \right) d\tau =\frac{1}{2}LU\left( t\right) ,
\end{eqnarray*}
where we dropped the $[b(t) , U(\tau )]$ term as this should vanish for $t> \tau$ and took half the weight of the $\delta$-function due to the upper limit $t$ of the integration. However, we get
\begin{eqnarray*}
b\left( t\right) U\left( t\right) =U\left( t\right) b\left( t\right) +\frac{1%
}{2}LU\left( t\right) .
\end{eqnarray*}
Plugging this into the equation (\ref{eq:QSDE_flux_Up}), we get
\begin{eqnarray*}
\dot{U}\left( t\right)  &=&b\left( t\right) ^{\ast }LU\left( t\right)
-L^{\ast }b\left( t\right) U\left( t\right) -iH\left( t\right) U\left(
t\right)  \\
&=&b\left( t\right) ^{\ast }LU\left( t\right) -L^{\ast }U\left( t\right)
b\left( t\right) -\left( \frac{1}{2}L^{\ast }L+iH\right) U\left( t\right) .
\end{eqnarray*}
which is now Wick ordered. We can interpret this as the Hudson-Parthasarathy equation
\begin{eqnarray*}
dU\left( t\right) =\left\{ L\otimes dB\left( t\right) ^{\ast }-L^{\ast
}\otimes dB\left( t\right) -\left( \frac{1}{2}L^{\ast }L+iH\right) \otimes
dt\right\} U\left( t\right) .
\end{eqnarray*}

The corresponding Heisenberg equation for $j_t (X) = U(t)^\ast [X \otimes I ] U(t) $ will be

\begin{eqnarray*}
dj_{t}\left( X\right)  &=&dU\left( t\right) ^{\ast }\left[ X\otimes I\right]
U\left( t\right) +U\left( t\right) ^{\ast }\left[ X\otimes I\right] dU\left(
t\right) \\
&& +dU\left( t\right) ^{\ast }\left[ X\otimes I\right] dU\left(
t\right)  \\
&=&j_{t}\left( \mathcal{L}X\right) \otimes dt+j_{t}\left( \left[ X,L\right]
\right) \otimes dB\left( t\right) ^{\ast }+j_{t}\left( \left[ L^{\ast },X%
\right] \right) \otimes dB\left( t\right) 
\end{eqnarray*}
where
\begin{eqnarray*}
\mathcal{L}X &=&-X\left( \frac{1}{2}L^{\ast }L+iH\right) -\left( \frac{1}{2}%
L^{\ast }L-iH\right) X+L^{\ast }XL \\
&=&\frac{1}{2}\left[ L^{\ast },X\right] L+\frac{1}{2}L^{\ast }\left[ X,L%
\right] -i\left[ X,H\right] .
\end{eqnarray*}
We note that we obtain the typical Lindblad form for the generator.

\subsection{Scattering Interactions}
We mention that we could also treat a Hamiltonian with only scattering terms
Let us set $\Upsilon \left( t\right) =E\otimes b\left( t\right) ^{\ast }b\left(
t\right) $. The same sort of argument leads to
\begin{eqnarray*}
\left[ b\left( t\right) ,U\left( t\right) \right] =-iE\int_{0}^{t}\left[
b\left( t\right) ,b\left( \tau \right) ^{\ast }\right] b\left( \tau \right)
U\left( \tau \right) d\tau =-\frac{i}{2}Eb\left( t\right) U\left( t\right) ,
\end{eqnarray*}
which can be rearranged to give
\begin{eqnarray*}
b\left( t\right) U\left( t\right) =\frac{1}{I-\frac{i}{2}E}U\left( t\right)
b\left( t\right) .
\end{eqnarray*}
So the Wick ordered form is
\begin{eqnarray*}
\dot{U}\left( t\right) =Eb\left( t\right) ^{\ast }b\left( t\right) U\left(
t\right) =\frac{E}{I-\frac{i}{2}}b\left( t\right) ^{\ast }U\left( t\right)
b\left( t\right) 
\end{eqnarray*}
or in quantum Ito form
\begin{eqnarray*}
dU\left( t\right) =\left( S-I\right) \otimes d\Lambda \left( t\right)
\,U\left( t\right) ,\qquad \left( S=\frac{I+\frac{i}{2}E}{I-\frac{i}{2}E}%
\text{, unitary!}\right) .
\end{eqnarray*}
The Heisenberg equation here is $dj_{t}\left( X\right) =j_{t}\left( S^{\ast
}XS-X\right) \otimes d\Lambda \left( t\right) $.

This is all comparable to the classical Poisson process driven evolution involving unitary kicks.

\subsection{The \guillemotleft SLH Formalism\guillemotright }
The examples considered up to now used only one species of quanta. We could in fact have $n$ channels, based on $n$ quantum white noises:
\begin{eqnarray*}
 [  b_j (t) ,
b^\ast_k (s) ] = \delta_{jk} \, \delta (t-s) .
\end{eqnarray*}

The most general form of a unitary process with fixed coefficients may be described as follows:
we have a \textbf{Hamiltonian} $H=H^\ast$, a column vector of coupling/ collapse operators
\begin{eqnarray*}
 L=\left[
\begin{array}{c}
 L_{1} \\
\vdots  \\
L_{n}
\end{array}
\right] ,
\end{eqnarray*}
and a matrix of operators
\begin{eqnarray*}
 S=\left[
\begin{array}{ccc}
S_{11}  &\cdots  &  S_{1n}  \\
\vdots  & \ddots  & \vdots  \\
 S_{n1}  & \cdots  &  S_{nn} 
\end{array}
\right] ,
\qquad   S^{-1} =  S^\ast .
\end{eqnarray*}

For each such triple $(S,L,H)$ we have the QSDE
\begin{eqnarray}
 d U(t) &=& \bigg\{
\sum_{jk}(  S_{jk} - \delta_{jk} I)\otimes
d \Lambda_{jk} (t)
+ \sum_j L_j  \otimes dB_j^\ast (t) \\
&&
- \sum_{jk} L_j^\ast S_{jk} \otimes dB_k (t) 
 -(\frac{1}{2} \sum_k L_k^\ast L_k +
i H ) \otimes dt \bigg\} \, U(t)
\label{eq:SLH_QSDE}
\end{eqnarray}
which has, for initial condition $U(0) =I$, a solution which is a unitary adapted quantum stochastic process. The emission-absorption case is the $n=1$ model with no scattering ($S=I$). Likewise the purse scattering corresponds to $H=0$ and $L=0$.

System observables evolve according to the Heisenberg-Langevin equation
\begin{eqnarray*}
d j_t ( X) &=& \sum_{jk}
j_t 
(S^\ast_{lj}XS_{lk}- \delta_{jk} X)
d \Lambda_{jk} (t)
+ \sum_{jl}j_t ( S_{lj}^\ast [L_l,X])\otimes dB_j (t)^\ast \\
&&  + \sum_{lk} j_t ([X,L^\ast_l ] S_{lk}) \otimes dB_k (t)
+j_t ( \mathscr{L} X)\otimes dt .
\end{eqnarray*}
where the generator is the traditional Lindblad form
\begin{eqnarray*}
\mathscr{L} X = \frac{1}{2}\sum_k L^\ast_k [X,L_k]
+ \frac{1}{2} \sum_k [L^\ast_k  , X] L_k -i [X,H ] .
\end{eqnarray*}

\subsection{Quantum Outputs}
The output fields are defined by
\begin{eqnarray*}
B^{\text{out}}_k (t) =
U(t)^\ast [ I \otimes B_k (t) ] U(t).
\end{eqnarray*}

From the quantum Ito calculus we find that
\begin{eqnarray*}
dB^{\text{out}}_j (t) = \sum_k j_t (S_{jk}) \otimes dB_k (t)
+ j_t (L_k) \otimes dt ,
\end{eqnarray*}
Or, maybe more suggestively in quantum white noise language,
\begin{eqnarray*}
b^{\text{out}}_j (t) = \sum_j j_t (S_{jk})
 \otimes b_k(t) +j_t (L_j) \otimes I.
\end{eqnarray*}

\section{Quantum Filtering}
We now set up the quantum filtering problem. For simplicity, we will take $n=1$ and set $S=I$ so that we have a simple emission-absorption interaction. We will also consider the situation where we measure the $Q$-quadrature of the output.

The initial state is taken to be $|\psi_0 \rangle \otimes |\Omega \rangle $, and in the Heisenberg picture this is fixed for all time.

The analogue of the stochastic dynamical equation considered in the classical filtering problem is the Heisenberg-Langevin equation
\begin{eqnarray*}
dj_{t}\left( X\right)  =j_{t}\left( \mathcal{L}X\right) \otimes dt+j_{t}\left( \left[ X,L\right]
\right) \otimes dB\left( t\right) ^{\ast }+j_{t}\left( \left[ L^{\ast },X%
\right] \right) \otimes dB\left( t\right) 
\end{eqnarray*}
where $\mathcal{L}X =\frac{1}{2}\left[ L^{\ast },X\right] L+\frac{1}{2}L^{\ast }\left[ X,L%
\right] -i\left[ X,H\right] $.

Some care is needed in specifying what exactly we measure: we should really work in the Heisenberg picture for clarity. The $Q$-quadrature of the input field is 
$Q\left( t\right) =B\left( t\right) +B\left( t\right) ^{\ast }$ which we have already seen is a Wiener process for the vacuum state of the field. Of course this is not what we measure - we measure the output quadrature!

Set
\begin{eqnarray*}
Y^{\text{in}}\left( t\right) =I\otimes Q\left( t\right) .
\end{eqnarray*}
As indicated in our discussion on von Neumann's measurement model, what we actually measure is
\begin{eqnarray*}
Y^{ \text{out}} (t) = U(t)^\ast Y^{\text{in}} (t) U(t) = B^{\text{out} } (t) + B^{\text{out}} (t)^\ast .
\end{eqnarray*}
The differential form of this is
\begin{eqnarray*}
dY^{\text{out}} (t) = dY^{\text{in}} (t) + j_t (L+L^\ast ) dt .
\end{eqnarray*}
Note that
\begin{eqnarray*}
dY^{\text{in}}\left( t\right) dY^{\text{in}}\left( t\right) =dt=dY^{\text{out%
}}\left( t\right) dY^{\text{out}}\left( t\right) .
\end{eqnarray*}

The dynamical noise is generally a quantum noise and can only be considered classical in very special circumstances, while the observational noise is just its $Q$-quadrature which can hardly be treated as independent!

In complete contrast to the classical filtering problem we considered earlier, we have no paths for the system - just evolving observables of the system. What is more these observables do not typically commute amongst themselves, or indeed the measured process.

We can only apply Bayes Theorem in the situation where the quantities involved have a joint probability distribution, and in the quantum world this requires them to be compatible. At this stage it may seem like a miracle that we have any theory of filtering in the quantum world. However, let us stake stock of what we have.

\subsection{What Commutes With What?}

For fixed $s \ge 0$, let $U(t,s)$ be the solution to the QSDE (\ref{eq:SLH_QSDE}) in time variable $ t \ge s$ with $U(s,s)=I$. Formally, we have 
\begin{eqnarray*}
U\left( t,s\right) =\vec{T}e^{-i\int_{s}^{t}\Upsilon \left( \tau \right) d\tau }
\end{eqnarray*}
which is the unitary which couples the system to the part of the field that enters over the time $s\leq
\tau \leq t$. In terms of our previous definition, we have $U(t) = U(t,0)$ and we have the property
\begin{eqnarray*}
U\left( t\right) =U\left( t,s\right) U\left( s\right) ,\qquad \left(
t>s>0\right) .
\end{eqnarray*}

In the Heisenberg picture, the observables evolve
\begin{eqnarray*}
j_{t}\left( X\right)  &=&U\left( t\right) ^{\ast }\left[ X\otimes I\right]
U\left( t\right) , \\
Y^{\text{out}}\left( t\right)  &=&U\left( t\right) ^{\ast }\left[ I\otimes
Q\left( t\right) \right] U\left( t\right) .
\end{eqnarray*}

We know that the input quadrature is self-commuting, but what about the output one?
A key identity here is that 
\begin{eqnarray*}
Y^{\text{out}}\left( t\right) =U\left( t\right) ^{\ast }Y^{\text{in}}\left(
s\right) U\left( t\right) ,\qquad \left( t>s\right) ,
\end{eqnarray*}
which follows from the fact that $\left[ Y^{\text{in}}\left( s\right)
,U\left( t,s\right) \right] =0$. 

\bigskip 

From this, we see that the process $Y^{\text{out}}$ is also commutative since
\begin{eqnarray*}
\left[ Y^{\text{out}}\left( t\right) ,Y^{\text{out}}\left( s\right) \right]
=U\left( t\right) ^{\ast }\left[ Y^{\text{in}}\left( t\right) ,Y^{\text{in}%
}\left( s\right) \right] U\left( t\right) =0,\quad \left( t>s\right) .
\end{eqnarray*}
If this was not the case then subsequent measurements of the process $Y^{\text{out}}$ would invalidate (disturb?) earlier ones. In fancier parlance, we say that process is \textbf{not self-demolishing} - that is, all parts are compatible with each other.

A similar line of argument shows that 
\begin{eqnarray*}
\left[ j_{t}\left( X\right) ,Y^{\text{out}}\left( s\right) \right] =U\left(
t\right) ^{\ast }\left[ X\otimes I,I\otimes Q\left( t\right) \right] U\left(
t\right) =0,\quad \left( t>s\right) .
\end{eqnarray*}
Therefore, we have a joint probability for $j_{t}\left( X\right) $ and the continuous collection of observables $\left\{
Y^{\text{out}}\left( \tau \right) :0\leq \tau \leq t\right\} $ so can use
Bayes Theorem to estimate $j_t (X)$ for any $X$ using the past observations.
Following V.P. Belavkin, we refer to this as the \textbf{non-demolition principle}.

\subsection{The Conditioned State}
In the Schr\"{o}dinger picture, the state at time $t \ge 0$ is $|\Psi _{t}\rangle =U\left( t\right) |\phi \otimes
\Omega \rangle $, so
\begin{eqnarray*}
d|\Psi _{t}\rangle  &=&-\left( \frac{1}{2}L^{\ast }L+iH\right) |\Psi
_{t}\rangle dt+LdB\left( t\right) ^{\ast }|\Psi _{t}\rangle -L^{\ast
}dB\left( t\right) |\Psi _{t}\rangle  \\
&=&-\left( \frac{1}{2}L^{\ast }L+iH\right) |\Psi _{t}\rangle dt+LdB\left(
t\right) ^{\ast }|\Psi _{t}\rangle +LdB\left( t\right) |\Psi _{t}\rangle  \\
&=&-\left( \frac{1}{2}L^{\ast }L+iH\right) |\Psi _{t}\rangle dt+LdY^{\text{in}} (t)|\Psi
_{t}\rangle .
\end{eqnarray*}

Here we have used a profound trick due to A.S. Holevo. The differential $dB(t)$ acting on $ | \Psi_t \rangle$ yields zero since it is future pointing and so only affects the future part which, by adaptedness, is the vacuum state of the future part of the field. To get from the first line to the second line, we remove and add a term that is technically zero. In its reconstituted form, we obtain the $Q$-quadrature of the input. The result is that we obtain an expression for the state $| \Psi_t \rangle$ which is \lq\lq diagonal\rq\rq \, in the input quadrature - our terminology here is poor (we are talking about a state not and observable!) but hopefully wakes up physicists to see what's going on.

The above equation is equivalent to the SDE in the system Hilbert space
\begin{eqnarray}
d|\chi _{t}\rangle =-\left( \frac{1}{2}L^{\ast }L+iH\right) |\chi
_{t}\rangle dt+L|\chi _{t}\rangle dy_{t}
\label{eq:BZ}
\end{eqnarray}
where $\mathbf{y}$ is a sample path - or better still, \guillemotleft eigen-path\guillemotright \, - of the quantum stochastic process $Y^{\text{in}}$.

We refer to (\ref{eq:BZ}) as the \textbf{Belavkin-Zakai equation}.

\subsection{The Quantum Filter}
Let us begin with a useful computational
\begin{eqnarray}
\langle \phi \otimes \Omega |j_{t}\left( X\right) F\left[ Y_{\left[ 0,t%
\right] }^{\text{out}}\right] |\phi \otimes \Omega \rangle 
&=&
\langle \phi \otimes \Omega | U(t)^\ast \big(  X\otimes  F\left[ Y_{\left[ 0,t%
\right] }^{\text{in}}\right] \big) U(t) |\phi \otimes \Omega \rangle \notag \\
&=&
\langle \Psi_t |   X\otimes  F\left[ Y_{\left[ 0,t%
\right] }^{\text{in}}\right]  |\Psi_t \rangle \notag \\
 &=&
\int
\langle \chi_t (\mathbf{y} ) |  X\otimes |\chi_t (\mathbf{y}) \rangle \, F\left[ \mathbf{y}\right]  
\, \mathbb{P}_{\text{Wiener}} [d \mathbf{y}]. \notag  \\
&& \quad
\label{eq:big}
\end{eqnarray}

A few comments are in order here. The operator $j_{t}\left( X\right) $ will commute with any functional of the past measurements - here $F\left[ Y_{\left[ 0,t\right] }^{\text{out}}\right] $. In the first equality is pulling things back in terms of the unitary $U(t)$. The second is just the equivalence between Schr\"{o}dinger and Heisenberg pictures. The final one just uses the equivalent form (\ref{eq:BZ}): note that the paths of the input quadrature gets their correct weighting as Wiener processes.

Setting $X=I$ in (\ref{eq:big}), we get the
\begin{eqnarray*}
\langle \phi \otimes \Omega | F\left[ Y_{\left[ 0,t%
\right] }^{\text{out}}\right] |\phi \otimes \Omega \rangle 
&=&
\int
\langle \chi_t (\mathbf{y} |  \chi_t (\mathbf{y} )\rangle \, F\left[ \mathbf{y}\right]  
\, \mathbb{P}_{\text{Wiener}} [d \mathbf{y}]
\end{eqnarray*}
So the probability of the measured paths is
\begin{eqnarray*}
\mathbb{Q} [d \mathbf{y}] =
\langle \chi_t (\mathbf{y} )|  \chi_t (\mathbf{y}) \rangle \, \mathbb{P}_{\text{Wiener}} [d \mathbf{y}] .
\end{eqnarray*}
Now this last equation deserves some comment! The vector $| \Psi_t \rangle$, which lives in the system tensor Fock space, is properly normalized, but its corresponding form $ | \chi _t \rangle$ is not! The latter is a stochastic process taking values in the system Hilbert space and is adapted to input quadrature. However, we never said that $| \chi_t \rangle$ had to be normalized too, and indeed it follows from or \lq\lq diagonalization\rq\rq \, procedure. In fact, if $| \chi_t \rangle$ was normalized then the output measure would follow a Wiener distribution and  so we would be measuring white noise!

From (\ref{eq:big}) again, we an deduce the filter: we get (using the arbitrariness of the functional $F$)
\begin{eqnarray}
\pi_t (X) = \frac{ \langle \chi_t (\mathbf{y}) | X | \chi_t (\mathbf{y}) \rangle}{
\langle \chi_t (\mathbf{y} )|  \chi_t (\mathbf{y}) \rangle } .
\end{eqnarray}

This has a remarkable similarity to (\ref{eq:filter_pi}). Moreover, using the Ito calculus see that
\begin{eqnarray*}
d \langle \chi_t (\mathbf{y}) | X | \chi_t (\mathbf{y}) \rangle
&=&
\langle \chi_t (\mathbf{y}) | \mathcal{L} X | \chi_t (\mathbf{y}) \rangle  dt \notag \\
&&+\langle \chi_t (\mathbf{y}) |  \big( XL+L^\ast X \big) |\chi_t (\mathbf{y} )\rangle \, dy(t).
\end{eqnarray*}
This is the quantum analogue of the Duncan-Mortensen-Zakai equation.

So small work is left in order to derive the filter equation. We first observe that the normalization (set $X=I$) is that
\begin{eqnarray*}
d \langle \chi_t (\mathbf{y}) |  \chi_t (\mathbf{y}) \rangle
=\langle \chi_t (\mathbf{y}) |  \big( L + L^\ast  \big) |\chi_t (\mathbf{y} )\rangle \, dy(t).
\end{eqnarray*}
Using the Ito calculus, it is then routine to show that the quantum filter is
\begin{eqnarray}
 d \pi_t (X) =  \pi_t ( \mathcal{L}X) \, dt + \big\{ \pi_t (XL+L^\ast X ) - \pi_t (X) \pi_t (L+L^\ast ) \big\} dI(t)
\end{eqnarray}
where the innovations are defined by
\begin{eqnarray}
 dI (t) = dY^{\text{out}} (t) - \pi_t (L+L^\ast ) \, dt .
\end{eqnarray}
Again, the innovations have the statistics of a Wiener process. As in the classical case, the innovations give the difference between what we observe next, $dY^{\text{out}} (t)$, and what we would have expected based on our observations up to that point, $\pi_t (L+L^\ast ) \, dt $. The fact that the innovations are a Wiener process is a reflection of the efficiency of the filter - after extracting as much information as we can out of the observations, we are left with just white noise.

\textbf{Acknowledgements}
I would like to thank the staff at CIRM, Luminy, and at Institut Henri Poincar\'{e} for their kind support for this workshop. I am also grateful to the other organizers Pierre Rouchon and Denis Bernard for valuable comments during the writing of these notes.

\end{document}